\documentclass[10pt]{article} 
\usepackage[preprint]{tmlr}

\usepackage{amsmath,amsfonts,amsthm,amssymb}
\usepackage{bm}
\usepackage{bbm}
\usepackage{stmaryrd}
\usepackage{algorithmic}
\usepackage{algorithm}
\usepackage{array}
\usepackage{verbatim}

\usepackage{url}
\usepackage{cite}
\usepackage{hyperref}
\usepackage{cleveref} 

\usepackage{wrapfig}
\usepackage[caption=false,font=normalsize,labelfont=sf,textfont=sf]{subfig}

\usepackage{tikz}
\usepackage{tikz-3dplot}
\usetikzlibrary {patterns.meta}
\usetikzlibrary{patterns}
\usepackage{standalone}
\usepackage{pgfplots}
\usepackage{fp}

\usepackage{xcolor}
\definecolor{rose_ens}{RGB}{205,120,148}
\definecolor{bleu_ens}{RGB}{30,60,80}
\definecolor{darkgray176}{RGB}{176,176,176}
\definecolor{gray}{RGB}{128,128,128}
\definecolor{slategray100100150}{RGB}{100,100,150}
\colorlet{reviews}{.}    
\colorlet{reviews2}{.}

\newcommand*{\R}{\mathbb{R}}
\newcommand*{\X}{\mathcal{X}}
\newcommand*{\Y}{\mathcal{Y}}
\newcommand*{\x}{\boldsymbol{x}}
\newcommand*{\y}{\boldsymbol{y}}
\newcommand*{\z}{\boldsymbol{z}}
\newcommand*{\ub}{\boldsymbol{u}}
\newcommand*{\vb}{\boldsymbol{v}}
\newcommand*{\A}{\boldsymbol{A}}
\newcommand*{\boldtheta}{\boldsymbol{\theta} }
\newcommand*{\tg}{\boldsymbol{T}_g}
\newcommand*{\Isat}{\mathcal{I}_{\textrm{sat}}}
\newcommand*{\Inonsat}{\overline{\Isat}}
\newcommand*{\I}{\mathcal{I}^{\A}_{\x,\ub}}

\DeclareMathOperator*{\argmin}{\arg\,min}

\newtheorem{proposition}{Proposition}
\newtheorem{lemma}{Lemma}
\newtheorem{theorem}{Theorem}
\newtheorem{corollary}{Corollary}[theorem]
\newtheorem{definition}{Definition}

\newtheorem{example}{Example}

\newtheorem{remark}{Remark}

\usepackage{lipsum} 

\title{Learning to reconstruct from saturated data: \\ audio declipping and high-dynamic range imaging}

\author{\name Victor Sechaud \email victor.sechaud@ens-lyon.fr \\
        \addr  CNRS and ENS de Lyon, Laboratoire de physique, \\
        Lyon, France
        \AND
        \name Laurent Jacques \email laurent.jacques@uclouvain.be \\
        \addr  UCLouvain, ICTEAM,\\
        Louvain-la-Neuve, Belgium
        \AND
        \name Patrice Abry \email patrice.abry@ens-lyon.fr\\
        \addr  CNRS and ENS de Lyon, Laboratoire de physique, \\
        Lyon, France
        CIFAR Fellow
        \AND
        \name Juli\'an Tachella \email julian.tachella@ens-lyon.fr \\
        \addr  CNRS and ENS de Lyon, Laboratoire de physique, \\
        Lyon, France
        }

\def\month{01}  
\def\year{2026} 
\def\openreview{\url{https://openreview.net/forum?id=AkJWgglkLd}} 

\begin{document}

    \maketitle

    \begin{abstract}
        Learning based methods are now ubiquitous for solving inverse problems, but their deployment in real-world applications is often hindered by the lack of ground truth references for training.
        Recent self-supervised learning strategies offer a promising alternative, avoiding the need for ground truth.
        However, most existing methods are limited to linear inverse problems.
        This work extends self-supervised learning to the non-linear problem of recovering audio and images from clipped measurements, by assuming that the signal distribution is approximately invariant to changes in amplitude.
        We provide sufficient conditions for learning to reconstruct from saturated signals alone and a self-supervised loss that can be used to train reconstruction networks.
        Experiments on both audio and image data show that the proposed approach \textcolor{reviews}{is almost as effective as} fully supervised approaches, despite relying solely on clipped measurements for training.
    \end{abstract}

    \section{Introduction}
Inverse problems appear in various engineering and science applications, such as tomography~\citep{jin_bpconvnet_2016}, MRI~\citep{lustig_sparse_2007} or phase retrieval~\citep{shechtman_phase_2015}.
They are described by the following forward equation:
\begin{equation}
\y = h(\x) + \boldsymbol{\epsilon},
\end{equation}
where the aim is to recover the ground-truth $\x \in \X \subseteq \R^n$ from measurements $\y \in \Y \subseteq \R^m$.
\textcolor{reviews}{The set $\X$ defines the support of the of the signal distribution $p(\x)$, which we refer to as the ``signal set''.}
The function $h: \R^n \mapsto \R^m$ represents the forward operator and $\boldsymbol{\epsilon}\in \R^m$ the noise.
These problems are often ill-posed due to potential information loss induced by $h$ (for example, if $m<n$) \textcolor{reviews}{or the existence of an infinite number of solutions}.
\textcolor{reviews}{One way of solving this is to impose prior information on $\x$}, such as piecewise constant assumptions~\citep{rudin_nonlinear_1992}.
However, choosing the right prior can be challenging and \textcolor{reviews2}{misspecified} priors can introduce biases or give poor approximations of the underlying signals.
This problem can be mitigated by learning the prior or reconstruction function directly from data.
Most learning based methods learn \textcolor{reviews2}{an inverse operator} of $h$ on $\X$, $\ f:\y\mapsto \x $ from a training dataset containing ground truth with their associated measurements $\left\{(\x_i,\y_i)\right\}^N_{i =1}$ and a neural network $f_{\boldtheta}$ that parameterizes the inverse function.
The standard approach computes the parameters $\boldtheta \in \R^{p}$ by minimizing the mean square error (MSE):
\begin{equation}
    \boldsymbol{\hat{\theta}} = \argmin\limits_{\boldtheta} \sum\limits^N_{i = 1} \|f_{\boldtheta}(\y_i)-\x_i\|^2. \label{mse}
\end{equation}
Despite the good performance in some applications, this approach suffers from two main drawbacks:
\textit{(i)} ground-truth images can be particularly difficult or impossible to obtain, e.g., in scientific imaging applications~\citep{belthangady_applications_2019}, and
\textit{(ii)} even when we have access to a ground truth dataset, there might be a significant distribution shift between training and testing.

Self-supervised learning presents an alternative that circumvents some of these limitations~\citep{belthangady_applications_2019}.
This approach operates without the need for ground-truth training data, making it especially advantageous in scenarios where obtaining such datasets is expensive or impractical.
Furthermore, self-supervised learning can be trained on measurement data directly, thus mitigating the challenge of distribution shift.

Learning from measurement data only is particularly challenging when the forward operator is not invertible~\citep{tachella_sensing_2023}.
If the forward operator is linear, self-supervised learning is possible by assuming that the underlying set of clean signals is invariant to a set of transformations, such as rotations or translations~\citep{chen_equivariant_2021}.
This also holds for the non-linear problem of reconstructing heavily quantized signals ($1$-bit measurements)~\citep{tachella_learning_2023}.
Here we show that invariance to rotations or translations is not enough for learning from clipped measurement data alone, and instead prove that invariance to amplitude is sufficient for fully self-supervised learning.
We propose a self-supervised loss that enforces equivariance to changes in amplitude, and show throughout a series of experiments that our method can achieve performance comparable to fully supervised methods.
A preliminary version of this work was presented in~\citet{sechaud_equivariance_2024}, and here we generalize to a larger class of signals such as images, and propose a theoretical framework showing sufficient conditions for learning to recover signals from saturated measurements alone.
In summary, the contributions of this work are threefold:
\begin{itemize}
    \item We provide necessary conditions for signal recovery under a known \textcolor{reviews}{signal set}, and also demonstrate, under an invariance assumption, that the \textcolor{reviews}{signal set} can be identified with a sufficient number of observed measurements;
    \item We propose a self-supervised loss that can be used to train neural networks for audio declipping and High-Dynamic Range (HDR) imaging;
    \item We show that this method can achieve comparable performance to fully supervised methods on both audio and image data.
\end{itemize}
The rest of the paper is structured as follows:~\Cref{sec:related-work} reviews related work concerning self-supervised learning and state-of-the-art declipping methods.
\Cref{sec:Analysis for model identification and signal recovery} introduces our theoretical framework, focusing on model identification—i.e., determining underlying signal set from observations—and signal recovery.
From there, in~\Cref{sec:Self-supervised-learning-approach} we describe how one can practically define and minimize a self-supervised cost function with scale invariance.
\Cref{sec:experiments} presents experimental results that validate our approach.

    \section{Related work} \label{sec:related-work}
    \subsection{Self-supervised learning for inverse problems}
Self-supervised methods allow to train reconstruction networks using only measurement data~\citep{chen_equivariant_2021}.
Existing methods can be separated into two categories: the ones that handle the noise of the problem, e.g., Noise2X methods~\citep{lehtinen_noise2noise_2018, krull_noise2void-learning_2019}, and the others that handle non-invertible forward operators~\citep{tachella_sensing_2023}.
The last set of methods mostly focuses on the case a linear operator where the loss of information is associated to its nullspace.
It is possible to overcome this missing information by
\textit{(i)} accessing measurements from several operators with different nullspaces~\citep{daras_ambient_2023, tachella_sensing_2023}, or
\textit{(ii)} by assuming that the signal distribution is invariant to a set of transformations such as translations or rotations, a method known as equivariant imaging (EI)~\citep{chen_equivariant_2021}.
The EI framework is well studied for linear inverse problems, both in terms of applications~\citep{scanvic_self-supervised_2024, chen_equivariant_2021} and theory~\citep{tachella_sensing_2023}.
To the best of our knowledge, the only case of nonlinear inverse problems analyzed using the invariance approach is one-bit compressed sensing~\citep{tachella_learning_2023}.

\subsection{Audio declipping}
Several unsupervised techniques for audio declipping have been proposed~\citep{zaviska_survey_2021, kitic_consistent_2013, gaultier_sparsity-based_2021}, many of which make use of variational methods with priors such as sparsity in certain bases (e.g.\ Fourier domain).
Both $\ell_0$ and $\ell_1$-minimization approaches are commonly employed, often solved using dedicated optimization algorithms, with social sparsity as one of the best performing~\citep{siedenburg_audio_2014}.
Learning-based techniques, such as dictionary learning, have also shown to be effective, where a dictionary is trained from time windows of the clipped signal to represent it as a sparse vector.
Additionally, methods incorporating prior knowledge based on human perception~\citep{defraene_declipping_2013}, or using multichannel data~\citep{ozerov_multichannel_2016, gaultier_cascade_2018}, have been used to improve performance.
Diffusion models trained on a supervised way (i.e., with ground-truth references) have also been studied in the context of declipping~\citep{moliner_solving_2023}.

\subsection{High Dynamic Range images}
The dynamic range in images refers to the difference between the brightest and darkest values.
Since camera sensors and displays have limitations in capturing and showing both dark and bright regions, this often results in lost details in these zones.
To address this, various techniques are employed to extend the dynamic range of images.
One common method is exposure bracketing~\citep{mertens_exposure_2007}, which combines multiple photos at different exposures for higher quality but requires long exposures and a stationary target to avoid misalignment.
Another approach is modulo sensing~\citep{contreras_high_2024}, a promising technique, though it is still limited by the unavailability of commercial hardware.
Finally, supervised deep learning methods, such as the ones proposed by~\citet{eilertsen_hdr_2017}, utilize a single image to estimate a high dynamic range.
However, these methods require a large dataset of ground truths to train the model effectively, which can be a significant limitation.
For example, in ultra-high-speed imaging (thousands of frames per second), sensors must trade off dynamic range for acquisition speed.
As a result, high-contrast and transient phenomena—such as combustion in rocket launch site monitoring—remain challenging to capture accurately, even with the most high-end cameras~\citep{giassi_use_2015, tao_perceptual_2025}.

\subsection{Signal recovery guarantees for saturated measurements}
Some theoretical works have analysed the feasibility of the signal recovery problem from saturated observations.
Following a compressive sensing approach~\citep{candes_introduction_2008}, \citet{foucart_sparse_2016} present unique recovery guarantees for the case \textcolor{reviews2}{of} Gaussian measurements and \textcolor{reviews2}{using} an \( \ell_1 \) minimization approach, showing that sparse vectors with low magnitude can be uniquely recovered with high probability.

    \section{Analysis for model identification and signal recovery} \label{sec:Analysis for model identification and signal recovery}
    \subsection{Problem formulation}

\textcolor{reviews}{We consider the forward formulation $\y = \eta(\x)$, where $\eta : \R \to \R$ is the element-wise clipping operator with threshold levels $\mu_1, \mu_2$:
\[
    \eta(u) = \left\{\begin{array}{rcl}
        \mu_1 &\text{ if } & u \leq \mu_1, \\
        u &\text{ if }  & u \in (\mu_1,\mu_2), \\
        \mu_2 &\text{ if } & u \geq \mu_2.
    \end{array}\right.
\]}For our examples with audio and synthetic signals, we use a symmetric threshold $\mu_1 = -\mu_2 = \mu$.
For images, we only consider an upper threshold, where $\mu_1 = 0$ and $\mu_2 = 1$, assuming positive signals $ \x $, as camera sensors do not suffer saturation from below.
Indeed, under low-light conditions, the sensor produces a weaker signal, which may lead to noise or reduced image quality, but not to saturation.
We assume that $\mu_1$ and $\mu_2$ are known and that we have a dataset of saturated signals $\left\{\y_i\right\}^N_{i=1}$.
We aim to learn the reconstruction function $f_{\boldtheta}$ using the measurement dataset alone.
We will therefore consider the following two theoretical questions:
\textit{(i)} \emph{model identification}: \textcolor{reviews2}{for a forward operator $\eta$ and measurement sets $\Y$, does a unique signal set $\X$ exist satisfying some mild priors and $\Y=\eta(\X)$?}
In other words, we want to know if it is possible to \textcolor{reviews}{find the support of the signal distribution $\X$} from measurement data \textcolor{reviews2}{\( \left\{ \y_i \right\}^N_{i=1} \)} alone as $N \to \infty$, \textcolor{reviews}{and some mild priors on $\X$ (e.g. invariance to transformations).}
If the set $\X$ can be identified, we could then hope to recover the signal $\x$ from a single observation $\y$ via the following program:
\begin{equation}
    \boldsymbol{\hat{x}} \in \argmin\limits_{\x \in \X} \| h(\x)-\y \|^2. \label{eq: best reconstruction}
\end{equation}
\textit{(ii)} \emph{Signal recovery}: is there a unique solution to the problem in~\eqref{eq: best reconstruction}?
In other words, can we uniquely recover the signal $\x$ from observations $\y$ knowing the set $\X$?
There may be a unique solution for one, both, or neither of the problems~\citep{tachella_sensing_2023}.
Blind compressed sensing is an example where we can have signal recovery if the dictionary is known, but we cannot identify the dictionary from measurement data alone~\citep{gleichman_blind_2010}.
On the contrary, we can identify the signal set from rank-one measurements without being able to recover the signals associated with each measurement~\citep{chen_exact_2015}.
Solving these two theoretical problems allows us to assess the feasibility of learning to reconstruct from measurement data alone, using the self-supervised loss proposed in \Cref{sec:Self-supervised-learning-approach}.
\textcolor{reviews}{In the following, we will use ``identify'' only for finding the set $\X$ given a measurement set $\Y$, and ``recovery'' for finding a vector $\x$ given a measurement $\y$.}

\subsection{Model identification.}
In this section, we will focus on the case where the full set of measurements $\Y$ is observed, although in practice we only observe $\{\y_i\}^N_{i=1}$.
We thus study the conditions for identifying $\X$ from the set of saturated signals $\Y = \eta(\X)$.

It may be interesting to note that in some situations we can easily identify the \textcolor{reviews}{signal set}.
This happens in the trivial case where the whole signal set is below the saturation threshold.
\begin{example}\label{example: trivial case}
    In the symmetric case with threshold level \(\mu\) \textcolor{reviews2}{and no additional priors on $\X$}, if $\|\y\|_\infty < \mu$ for all $\y \in \Y$, then $\X = \Y$.
\end{example}
\textcolor{reviews2}{However, in general without any assumptions on $\X$ neither $\Y$, we cannot hope to guarantee identification, as shown in  \Cref{remark: counter example}.}
We thus need to consider some assumptions on $\X$ to identify the \textcolor{reviews}{signal set} from $\Y$.
As done in equivariant imaging~\citep{chen_equivariant_2021}, we can consider the group of transformations $G$ \textcolor{reviews2}{under} which the \textcolor{reviews}{signal set} is invariant.
The reason is that the invariance to transformations $\{\tg \}_{g\in G}$ gives us access to the sets $\Y_g = \eta(\tg\X)$ for all $g\in G$.
Indeed, we can see $\Y$ as a set of measurements associated with the set $\X$ for different forward operators $\eta(\tg\cdot)$, as we have that:
\[\Y_g = \eta(\tg\X) = \eta(\X) = \Y \quad \textrm{if } \quad \tg\X = \X.\]
We then need to identify which groups of transformations can help in identifying $\X$.
We cannot use the classical groups considered in the linear framework (shift or rotation) as these transformations commute with the operator $\eta$.
As shown in the following proposition, if the transformations commute with $\eta$, they do not help to identify the \textcolor{reviews}{signal set}.

\begin{proposition}
    Let $G$ be a group, if we cannot identify $\X$ from $\Y= \eta(\X)$ and if for all $g\in G$, the transformations $\tg$ commute with $\eta(\cdot)$, then we cannot identify $\X$ from $\Y_g = \eta(\tg\X)$ for all $g\in G$.
\end{proposition}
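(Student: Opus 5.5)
The plan is to make precise what ``cannot identify'' means and then exhibit the ambiguity directly. Non-identifiability of $\X$ from $\Y=\eta(\X)$ means there exist two distinct admissible signal sets $\X_1\neq\X_2$ with $\eta(\X_1)=\eta(\X_2)$. Identifiability from the family $\{\Y_g\}_{g\in G}$ means that the whole collection of sets $\eta(\tg\X)$, $g\in G$, determines $\X$ uniquely. So it suffices to show that the same pair $\X_1,\X_2$ also produces identical collections, i.e.\ $\eta(\tg\X_1)=\eta(\tg\X_2)$ for every $g\in G$.

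The key step is a one-line computation using commutativity. For every $g\in G$, the assumption $\eta(\tg\x)=\tg\eta(\x)$ for all $\x$ lifts to sets: $\eta(\tg\X_i)=\{\eta(\tg\x):\x\in\X_i\}=\{\tg\eta(\x):\x\in\X_i\}=\tg\eta(\X_i)$. Hence
\[
\eta(\tg\X_1)=\tg\eta(\X_1)=\tg\eta(\X_2)=\eta(\tg\X_2),
\]
so the measurement sets $\Y_g$ coincide for $\X_1$ and $\X_2$ for all $g\in G$. The family $\{\Y_g\}_{g\in G}$ therefore carries no more information than $\Y$ itself, since each $\Y_g=\tg\Y$ is a deterministic image of $\Y$. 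No procedure using these sets can distinguish $\X_1$ from $\X_2$, which contradicts identifiability.

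The only delicate point, and what I expect to be the main obstacle, is the bookkeeping on the class of admissible signal sets. If identifiability is understood within a class of $G$-invariant sets, as in the discussion preceding the proposition, one must ensure that the ambiguous pair can be taken $G$-invariant. I would handle this by interpreting the hypothesis ``we cannot identify $\X$ from $\Y$'' as non-identifiability within the same prior class. Then $\X_1,\X_2$ are admissible by hypothesis and the computation above applies verbatim.

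Alternatively, starting from arbitrary $\X_1,\X_2$, one could replace them by their orbits $\bigcup_{g}\tg\X_i$. Commutativity again gives $\eta\bigl(\bigcup_g\tg\X_1\bigr)=\bigcup_g\tg\eta(\X_1)=\bigcup_g\tg\eta(\X_2)=\eta\bigl(\bigcup_g\tg\X_2\bigr)$. One would then still need to check that the orbits remain distinct. I will state the first interpretation explicitly and give the short argument above.
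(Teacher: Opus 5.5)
Your proposal is correct and follows the paper's argument: the paper also uses commutativity to write $\bigcup_{g\in G}\Y_g=\bigcup_{g\in G}\tg\eta(\X)$, and concludes that the augmented measurements depend only on $\eta(\X)$. Your version, which exhibits an explicit ambiguous pair $\X_1\neq\X_2$ with $\eta(\tg\X_1)=\eta(\tg\X_2)$ for every $g$, is a more precise rendering of the same idea.
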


\begin{proof}
Let $G$ such that for all  $g\in G, \; \tg\eta(\cdot) = \eta(\tg \cdot )$.
We have access to all measurements associated with the different transformations of the signals:
\[\bigcup\limits_{g\in G}\Y_g = \left\{y_{g} = \eta(T_{g}\x) : \forall g\in G, \ \forall \x\in \X \right\}.\]
As $\tg$ and $\eta$ commute, we have
\begin{equation*}
    \bigcup\limits_{g\in G}\Y_g  = \left\{y_g = \tg\eta(\x) : \forall g\in G, \ \forall \x \in \X\right\} = \bigcup\limits_{g\in G} \tg\eta(\X),
\end{equation*}
and so we cannot identify $\X$ from that set since it doesn't bring us more information than $\eta(\X)$; the set of measurements is a function of $\eta(\X)$.
\end{proof}

\Cref{example: trivial case} shows us that the norm of $\x$ has an impact on the feasibility of reconstructing the signal, which gets us to consider the group action of scaling.

\begin{definition}[Scale invariance]
    We say that $\X$ is scale invariant if
    \begin{equation}
        g\X = \X \text{ for all }g \in \R^*_+ \label{scale invariant}.
    \end{equation}
    A set that is scale-invariant is called a cone~\citep{boyd_convex_2004}.
\end{definition}

\begin{proposition} \label{model identification}
    \textcolor{reviews}{For any two distinct conic sets $\X \neq \X'$, we have that $\eta(\X) \neq \eta(\X')$.}
\end{proposition}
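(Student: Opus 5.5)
The plan is to prove the contrapositive: if $\eta(\X)=\eta(\X')$ then $\X=\X'$. I will do this by writing down an explicit reconstruction map that produces a conic set $\X$ from $\Y=\eta(\X)$ alone. The intuition comes from \Cref{example: trivial case}: on measurements that are not saturated, $\eta$ acts as the identity. Scale invariance lets us shrink any signal until it falls in this unsaturated region, and then expand it back.

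Concretely, let $B=\{\ub\in\R^n:\ \mu_1<u_i<\mu_2 \ \forall i\}$ in the symmetric case $\mu_1=-\mu<0<\mu_2=\mu$. In the image case ($\mu_1=0$, signals assumed nonnegative) let $B=[0,\mu_2)^n$. I claim that
\[
\X=\bigcup_{t>0} t\,(\Y\cap B),
\]
which depends only on $\Y$. This immediately gives the proposition.

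\emph{Inclusion $\supseteq$.} Take $\x\in\X$ and choose $t>0$ small enough that $t\|\x\|_\infty<\mu$ (respectively $<\mu_2$). Then $t\x\in B$. Since $\X$ is a cone, $t\x\in\X$. Since $\eta$ is the identity on $B$, we get $\eta(t\x)=t\x\in\Y\cap B$. Hence $\x=t^{-1}(t\x)$ belongs to the right-hand side.

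\emph{Inclusion $\subseteq$.} Take $\y\in\Y\cap B$, so $\y=\eta(\x)$ for some $\x\in\X$. I need to show $\y=\x$. If $y_i\in(\mu_1,\mu_2)$, then coordinate $i$ was not clipped, because clipped coordinates equal exactly $\mu_1$ or $\mu_2$; hence $x_i=y_i$. In the symmetric case every coordinate of $\y\in B$ is of this type. In the image case the only extra possibility is $y_i=0=\mu_1$. That forces $x_i\le 0$, and nonnegativity of the signals gives $x_i=0=y_i$. Thus $\y=\x\in\X$, and then $t\y\in\X$ for all $t>0$ by scale invariance.

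The step needing the most care is the second inclusion at the boundary of the clipping box. One must make sure a measurement lying in $B$ was genuinely unclipped. This is why $B$ is taken open at the active thresholds and why the positivity assumption is used at $\mu_1=0$. One should also require that $0$ lies in the closure of the box so that shrinking works, which holds in both settings of the paper. With these conventions the map $\Y\mapsto\bigcup_{t>0}t(\Y\cap B)$ is a left inverse of $\eta$ on conic sets. Therefore $\X\neq\X'$ implies $\eta(\X)\neq\eta(\X')$.
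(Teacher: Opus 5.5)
Your proposal is correct and follows essentially the same route as the paper. The paper's proof first shows $\Y\cap\mathbb{B}_\mu=\X\cap\mathbb{B}_\mu$ for the open box $\mathbb{B}_\mu$, and then proves, by the same two inclusions you use, that $\X$ equals the conic extension $\{g\y : g>0,\ \y\in\Y\cap\mathbb{B}_\mu\}$. Your one-sided case ($\mu_1=0$, nonnegative signals, half-open box $[0,\mu_2)^n$) is a correct extension that the paper only asserts without proof in \Cref{remark: counter example}.
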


\begin{proof}
    Let the open set $\mathbb{B}_\mu = \left\{ \ub \in \R^n \ : \ \|\ub\|_\infty < \mu\right\}$ with radius $\mu >0$.
    We recall that we note $\Y$ the set of measurement $ \Y = \eta(\X)$.
    Taking the set of unsaturated signals in $\X$, that is $\X \cap \mathbb{B}_\mu$, the operator $\eta$ has no impact on them and therefore
    \[ \X \cap \mathbb{B}_\mu = \eta(\X \cap \mathbb{B}_\mu).\]
    Moreover, if $\x$ is in $\mathbb{B}_\mu$, then $|\eta(\x)| < \mu$, therefore
    \[\eta(\X \cap \mathbb{B}_\mu) \subset \Y \cap \mathbb{B}_\mu.\]
    But if $\y$ is in $\eta(\X) \cap \mathbb{B}_\mu$, then $\y$ never reaches the threshold $\mu$ or $-\mu$, and there is thus an $\x$ in $\X \cap \mathbb{B}_\mu$ such that $\y = \eta(\x) = \x$, i.e.,
    \[(\eta(\X) \cap \mathbb{B}_\mu) \subset (\X \cap \mathbb{B}_\mu)\]
    showing finally that
    \[\Y \cap \mathbb{B}_\mu = \X \cap \mathbb{B}_\mu.\]
    Defining $\hat{\X} = \left\{ g\y \ : \ g\in \R^*_+, \ \y \in \Y \cap \mathbb{B}_\mu \right\}$ as the conic extension of $\Y \cap \mathbb{B}_\mu$.
    We will show that \( \hat{\X} = \X \).
    \paragraph{Inclusion \( \X \subset \hat{\X} :\)}
    Let \( \x \in \X \). Then, there exists \( c > 0 \) such that \( \y = c \x \) belongs to \( \X \cap \mathbb{B}_\mu \).
    Therefore, \( \x = \frac{1}{c} \y \) belongs to \( \hat{\X} \), which implies \( \X \subset \hat{\X} \).
    \paragraph{Inclusion \( \hat{\X} \subset \X :\)}
    Let \( \hat{\x} \in \hat{\X} \).
    By definition, there exist \( g \in \mathbb{R}_+^* \) and \( \y \in \Y \cap \mathbb{B}_\mu \) such that
    \(\
    \hat{\x} = g \y.
    \)
    Since \( \Y \cap \mathbb{B}_\mu = \X \cap \mathbb{B}_\mu \), we have \( \y \in \X \).
    As \( \X \) is a cone, it follows that \( \hat{\x} = g \y \in \X \).
    Thus, we conclude that \( \hat{\X} \subset \X \), completing the proof.

\end{proof}

\begin{remark} \label{remark: counter example}
    For asymmetric clipping, the same result can be obtained for $\mu_1 \leq 0 \mathrel{\textcolor{reviews}{<}} \mu_2$ and $\X \subset \R_+^n$.
    However, we can find a $2$-dimensional counter example for $ 0 < \mu_1 < \mu_2$:
    we choose $\X_1 = \left\{ a\x_1 \ : \ a\in \R^+\right\}$ and $\X_2 = \left\{ a\x_2 \ : \ a\in \R^+\right\}$
    with $\x_1 = (\mu_2,\frac{\mu_1}{2})^\top$ and $\x_2 = (\mu_2,\frac{\mu_1}{3})^\top$.
    Then $\X_1$ and $\X_2$ have the same measurement set:
    \[\Y = \left\{ (t, \mu_1) : t\in (\mu_1,\mu_2)\right\} \cup \left\{ (\mu_2, t) : t\in (mu_1,\mu_2)\right\}.\]
\end{remark}

\subsection{Signal recovery}
Sufficient conditions for signal recovery generally depend on the dimension of $\X$~\citep{sauer_embedology_1991}.
Following previous work in compressed sensing~\citep{sauer_embedology_1991, falconer_fractal_2013}, we use the box-counting dimension, a measure of the complexity of a set which generalizes several notions of dimension.

\begin{definition}
    The upper box-counting dimension of a compact subset $\X$ is
    \[\dim(\X) = \limsup\limits_{\epsilon \to 0}\dfrac{\log\left[N_\X(\epsilon)\right]}{-\log(\epsilon)},\]
    where $N_\X(\epsilon)$ is the covering number, i.e., the minimum number of closed balls of radius $\epsilon$ (with respect to the norm $\| \cdot \|_2$) with centres in $\X$ needed to cover $\X$.
\end{definition}

It can be applied to various structures:
\begin{itemize}
    \item Vector spaces: The box-counting dimension of a space $\mathbb{R}^k$ intersected with the unit ball is $k$, corresponding to the classical geometrical dimension.
    \item Manifolds: For a manifold of dimension $k$, the box-counting dimension is $k$, consistent with its geometric dimension.
    \item Bounded $k$-sparse sets: In high dimensional spaces, the box-counting dimension reflects the effective dimension $ k $ rather than the ambient dimension $n$.
\end{itemize}
Thus, the box-counting dimension uniquely determines the dimension of these various objects, providing a flexible framework for measuring the dimension of objects that can appear in compressive sensing or inverse problems.

We approach the problem of signal recovery by considering, for a conic signal set $\X$, its normalized version:
\[S_\X = \left\{ \frac{\x}{\|\x\|_2} \ : \ \x\in \X\right\}\]
and the open bounded set
\[\X_R =  \left\{ \x\in \X \ : \ \|\x\|_2 < R \right\}.\]
\textcolor{reviews}{
When studying the signal recovery problem, one observes that pathological cases where $\eta(\x_1)=\eta(\x_2)$ for $\x_1\neq \x_2$ may arise even when $\dim(\X)$ is small, as illustrated in \Cref{fig: cube one to one}.
As the operator $\eta$ acts as a projection onto the cube $\mathbb{B}_\mu$ if the signal set is orthogonal to one of the cube's faces, it becomes impossible to recover a signal once it reaches that face.
However, we expect such cases to be relatively rare, and therefore we study the problem by randomizing the orientation.
To this end, \textcolor{reviews2}{as commonly employed in} signal recovery analyses~\citep{ahmed_blind_2013}, we consider the operator $\eta(\A \cdot)$, where $\A \in \mathbb{R}^{m \times n}$ is a random Gaussian matrix, which acts \textcolor{reviews2}{approximaly} as a random rotation of the signal set.
An illustration is provided in \Cref{fig: cube one to one}.
The following theorem provides sufficient conditions on the number of measurements and the maximum radius $R$ to ensure, with high probability, that distances between points in the set $\X_R$ are approximately preserved after applying $\eta(\A \cdot)$.
\Cref{subsubsec:MNIST} provides a simplified example demonstrating how this matrix effectively addresses degenerate cases.}

\begin{figure}[t]
    \centering
    \includegraphics{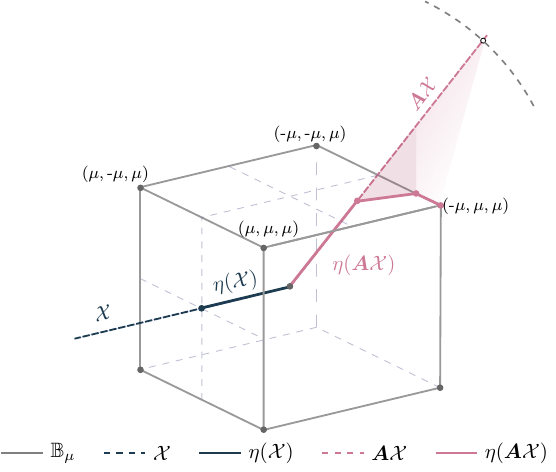}
    \caption{Example in 3 dimensions illustrating \Cref{Grand theoreme}.
    The black set presents a particularly challenging scenario in which all saturated signals are projected to the same point.
    In contrast, with high probability on $\A$, the colored set enables recovery of more signals with moderate norms, as points in $\X$ beyond a certain radius— which is connected to the signal norm when $\A$ is Gaussian—are all projected onto a single corner, showing the non-injectivity beyond this radius.}
    \label{fig: cube one to one}
\end{figure}

\textcolor{reviews2}{\begin{theorem}\label{Grand theoreme}
    Let us assume that the normalized set $S_\X$ has a finite upper box-counting dimension $\dim(S_\X)<k$, such that $N_\X(\epsilon) \leq \epsilon^{-k}$ for all $\epsilon < \epsilon^*$ with $\epsilon^* \in (0,1/2)$.  For $m,n > 0$, given a random matrix $\A\in \R^{m\times n}$ with i.i.d.~entries $A_{ij} \sim \mathcal{N}\left(0, 1\right)$, a threshold $\mu > 0$, and a radius $R > 0$, there exists an absolute constant $C> 0$ such that, if
    \[
    \textstyle m \geq k\log(1/\epsilon^*)\ \text{and}\ R< \mu(\frac{1}{2} - \frac{k+1}{m}),
    \]
    then
    \begin{equation}
        \mathbb{P}\left(\eta(\A \x) = \eta(\A \ub) \text{ for some } \x \neq \ub \in \X_R\right) \leq 12e^{-Cm}. \label{eq: result signal recovery}
    \end{equation}
    In other words, we measure the probability that the mapping \(\eta \circ \A \) is not injective (and therefore impossible to invert) over $\X_R$.
\end{theorem}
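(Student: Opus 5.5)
The plan is to reduce non-injectivity of $\eta(\A\cdot)$ on $\X_R$ to a statement about \emph{unsaturated coordinates}. Then we invoke a union bound over a net of the normalized set $S_\X$. The key observation is this. If $\eta(\A\x)=\eta(\A\ub)$, then on every coordinate $i$ where both $|(\A\x)_i|<\mu$ and $|(\A\ub)_i|<\mu$ we have $(\A\x)_i=(\A\ub)_i$. Moreover, if one of the two is unsaturated and the outputs agree, the other must be unsaturated too. Writing $\I$ for the set of unsaturated rows of $\A\x$, equality of the outputs forces $\A_{\I}(\x-\ub)=0$, where $\A_{\I}$ is the row-submatrix of $\A$ indexed by $\I$. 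So it suffices to show two things with high probability. First, for every $\x\in\X_R$, at least $m/2$ rows are unsaturated. Second, every submatrix $\A_I$ with $|I|\ge m/2$ is injective on the secant set $\X_R-\X_R$, i.e.\ on differences $\x-\ub$.

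The first step controls saturation. For fixed unit $\vb\in S_\X$, the entries $(\A\vb)_i$ are i.i.d.\ $\mathcal{N}(0,1)$. For $\x=r\vb$ with $r<R$, row $i$ saturates only if $|(\A\vb)_i|\ge \mu/R$. Since $R<\mu/2$, each row saturates with probability at most $\mathbb{P}(|g|\ge 2)$, a small constant $p<1/2$, and a Chernoff bound shows that fewer than $m/2$ rows are unsaturated with probability at most $e^{-cm}$. To make this uniform over $S_\X$, cover $S_\X$ by $\epsilon^{-k}$ balls with $\epsilon<\epsilon^*$. Combined with $\|\A\|_{2\to\infty}$ or the operator norm bound $\|\A\|\lesssim\sqrt{m}+\sqrt n$, this transfers unsaturation from net points to all points at a slightly smaller margin, and the slack term $\frac{k+1}{m}$ in the condition on $R$ is spent here. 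The union bound costs $\epsilon^{-k}=e^{k\log(1/\epsilon)}$, which the hypothesis $m\ge k\log(1/\epsilon^*)$ absorbs into $e^{-Cm}$ after adjusting constants.

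The second step is injectivity on unsaturated rows. Here we show $\A_I\z\neq0$ for all $\z$ in the normalized secant set of $\X$, uniformly over index sets $|I|\ge m/2$. Because $\X$ is a cone, the normalized secants $\{(\x-\ub)/\|\x-\ub\|\}$ have box dimension at most about $2k+1$, built from pairs of net points of $S_\X$ and radii. For each fixed unit $\z$, the vector $\A\z$ is standard Gaussian in $\R^m$. The probability that at least $m/2$ of its coordinates are tiny, $|(\A\z)_i|\le\delta$, is at most $\binom{m}{m/2}(c\delta)^{m/2}$, which is exponentially small. A net argument over the secant set, again with the operator norm of $\A$ controlling the approximation error, then gives a lower bound $\min_{|I|\ge m/2}\|\A_I\z\|>0$. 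This relies on an anti-concentration and small-ball estimate rather than a full RIP, since only positivity is needed. Collecting the failure events (the operator norm, the saturation count, and the small-ball event, each with a couple of Chernoff tails) produces a bound of the form $12e^{-Cm}$.

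I expect the main obstacle to be the second step. The unsaturated index set $\I$ depends on $\x$ and hence on $\A$, so we cannot fix it in advance. Paying $\binom{m}{m/2}\le 2^m$ for all index sets must be balanced against the small-ball probability and the net size of the secant set. The secant set's dimension (roughly $2k$ rather than $k$) must also fit within the budget $m\ge k\log(1/\epsilon^*)$. I would first try a crude version that chooses $\delta$ as a small absolute constant, so that $(c\delta)^{m/2}2^m\le e^{-c'm}$, and use cone homogeneity to avoid any dependence on $R$ in this step.
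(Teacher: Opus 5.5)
Your reduction is sound. Noting that $\eta(\A\x)=\eta(\A\ub)$ forces $\Inonsat(\A\x)=\Inonsat(\A\ub)$ is even a bit sharper than the paper's use of the intersection $\I$. However, your second step rests on a false claim: the normalized secant set of a set with small upper box-counting dimension need \emph{not} have dimension $\lesssim 2k+1$. Dividing by $\|\x-\ub\|_2$ is singular near the diagonal, and the limiting directions can fill a sphere. For example, let $S_\X$ be the closure of $\{\vb_j\}_{j\geq 1}$, where $\vb_j=(\boldsymbol e_1+2^{-j}\boldsymbol w_j)/\|\boldsymbol e_1+2^{-j}\boldsymbol w_j\|_2$ and $(\boldsymbol w_j)_j$ is dense in the unit sphere of $\boldsymbol e_1^\perp$. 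Then $N_{S_\X}(\epsilon)\leq 1+\log_2(1/\epsilon)$, so $\dim(S_\X)=0$. Yet the directions $(\vb_j-\boldsymbol e_1)/\|\vb_j-\boldsymbol e_1\|_2$ accumulate on the whole unit sphere of $\boldsymbol e_1^\perp$. For any $I$ with $|I|\leq n-2$, the kernel of $\A_I$ meets $\boldsymbol e_1^\perp$ nontrivially, so $\inf_{\z}\|\A_I\z\|_2=0$ over normalized secants, even though $\A_I$ is almost surely injective on this $\X_R$. A uniform positive lower bound over normalized secants is a bi-Lipschitz (RIP-type) property, and the box-dimension hypothesis does not give it.

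The missing idea is an almost-sure embedding theorem. Since $\X$ is a cone, $\dim(\X_R)<k+1$. By the prevalence-type result the paper takes from Robinson (Thm.~4.3), for each fixed $I$ with $|I|>2(k+1)$ the Gaussian matrix $\A_I$ is injective on $\X_R$ with probability one, because non-injective linear maps form a Lebesgue-null set. There are finitely many such $I$, so almost surely all of them are injective simultaneously. This disposes of your ``main obstacle'' (that $\I$ depends on $\A$) at zero cost, with no nets or small-ball estimates. It is also where the $(k+1)/m$ term in the radius comes from: the paper needs more than $2(k+1)$ commonly unsaturated rows, i.e.\ $(1-2R/\mu)m>2(k+1)$. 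The term is not net slack spent in your first step.

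Your first step has an independent problem. Transferring unsaturation from net points to all of $S_\X$ via $\|\A\|_{2\to\infty}\approx\sqrt n$ or $\|\A\|\lesssim\sqrt m+\sqrt n$ forces a net scale $\epsilon\lesssim 1/\sqrt n$ (respectively $\epsilon\lesssim\sqrt{m/(m+n)}$). The union bound then costs $\epsilon^{-k}=e^{k\log(1/\epsilon)}$ with $k\log(1/\epsilon)$ of order $k\log(1+n/m)$. The hypothesis $m\geq k\log(1/\epsilon^*)$ is independent of $n$ and cannot absorb this by adjusting constants; the same issue recurs in the net of your second step.

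The paper avoids this with a multiscale chaining bound. It controls $\sup_{\x\in S_\X}\|\A\x\|_1$ through an $\ell_1$ variant of Puy et al.'s embedding theorem, whose requirement depends only on the covering numbers of $S_\X$. It then counts saturations via $\mu|\Isat(\A\x)|\leq\|\A\x\|_1\leq m\|\x\|_2<mR$ (in the $\mathcal N(0,1)$ normalization, with probability at least $1-6e^{-Cm}$), which gives $|\Isat(\A\x)|<mR/\mu$ uniformly on $\X_R$. To repair your argument, replace the single-scale per-coordinate net by chaining on a Lipschitz, homogeneous surrogate such as $\|\A\x\|_1$, and replace the small-ball/secant argument by the almost-sure embedding theorem.
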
}

\textcolor{reviews2}{Theorem~\ref{Grand theoreme} shows that the event of picking two vectors of $\mathcal X_R$ with identical image through $\eta \circ \A$ has a probability that decays exponentially fast with the dimension $m$ if this dimension is large compared to the dimension $k$ of the normalized set $S_\X$, and the radius is not bigger than $\mu(\frac{1}{2} - \frac{k+1}{m})$.
This is quite an interesting effect considering that that the fraction of components of $\A \x$ or $\A \ub$ that are impacted by the saturation $\eta$ is constant (in average) when $m$ increases.
Indeed, for $\x \in \mathcal X_R$, each entry of $\A \x$ is distributed as a Gaussian random variable with zero mean and standard deviation $\|\x\|$.
Therefore, for $g \sim \mathcal N(0,1)$ and if $m$ is large enough, the fraction of saturated projections $m^{-1} |\{1\leq i \leq m: |(\A \x)_i| \geq \mu\}|$ is close to $m^{-1} \sum_{i=1}^m \mathbb{P}(|(\A \x)_i| \geq \mu) = \mathbb{P}(|g| \|\x\| \geq \mu)$, which does not depend on $m$.
}

\textcolor{reviews}{To consider the initial problem, that is, the forward operator $\eta( \cdot )$, we may introduce an assumption on $\X$ by considering it as a cone of the form
\[\X = \left\{\A \boldsymbol z : \boldsymbol z\in \mathcal{Z} \right\},\]
\textcolor{reviews2}{where} $\mathcal{Z}$ is a cone with $S_{\mathcal{Z}}$ of box-dimension $k$ and $\A\in \R^{n\times n}$ is such that $A_{ij} \sim \mathcal{N}\left(0, \textcolor{reviews2}{1}\right)$.
We can apply \Cref{Grand theoreme} in the case of a random rotation $m=n$, to show that a signal $\x$ can be recovered with high probability if $\|\x\|_2 < \|\A\| R$, where $\|\A\|$ denotes the norm of the operator $\A$, and in the case of Gaussian square matrix of variance $\textcolor{reviews2}{1}$, it is close to $2 \textcolor{reviews2}{\sqrt{n}}$~\citep{vershynin_high-dimensional_2018}.}

The proof of the theorem is provided in the \Cref{sec: proof}, and sketched in \Cref{fig: cube one to one}.
It consists in showing that the same measurement vector cannot be associated with two distinct signals, with high probability.
The core idea of the proof is to decompose the probability that the mapping is not injective into two parts:
Given two distinct signals, we control, the probability that the number of common unsaturated measurements is small (thus making it impossible to differentiate between 2 signals).
On the other hand, we control the probability that two signals have the same common unsaturated measurements given that the number of saturated measurements is low and so it is acceptable to ignore them.

Let us stress that in \Cref{Grand theoreme} we cannot set an upper bound on the probability in~\eqref{eq: result signal recovery} for all signals in $\X$ and need to limit this space to bounded signals from $\X_R$.
Some signals in $\X$ with an excessive norm cannot be recovered as they have too numerous saturated measurements.
In \Cref{fig: cube one to one}, this is illustrated by the accumulation, after saturation, of signals in the corners of the cube, particularly when signals have high norms, i.e., far from the cube's center.
When all measurements are saturated, or equivalently when $\mu$ tends to $0$, we expect to recover the behavior of $1$-bit compressive sensing~\citep{tachella_learning_2023}, where $\x$ can only be estimated up to a non-zero error.
This is why we focus on signals with low norms.
\Cref{Grand theoreme} can be seen as a generalization of~\citet[Theorem 1]{foucart_sparse_2016}, to more general low-dimensional sets, going beyond sets of $k$-sparse signals.

    \section{Self-supervised learning approach} \label{sec:Self-supervised-learning-approach}
    Given the positive theoretical results regarding model identification and signal recovery of the previous section, we now propose a method for learning from saturated signals $\left\{\y_i = \eta(\x_i)\right\}_{i=1}^N$.
We first introduce the loss functions used to train the network, then describe the network architecture and implementation details used for the reconstruction.

\subsection{Loss functions}
\textbf{Measurement consistency (MC) loss:} in the absence of noise, a good reconstruction method should provide an estimate that is consistent with the observations, i.e., for all measurements $\y$, we should have
\[\eta(f_{\boldtheta}(\y)) = \y.\]
Measurement consistency means that the reconstructed signal produces the same measurements as those that were originally observed.
Ensuring MC is crucial, as it guarantees that estimates adhere to observations.
However, it does not necessarily mean that the solution is unique since multiple solutions could be consistent with the same measurements.
A straightforward loss function for imposing measurement consistency is
\begin{equation}
    \mathcal{L}_\textrm{NMC}(\boldtheta) = \sum\limits^N_{i=1} \|\y_i - \eta\left(f_{\boldtheta}(\y_i)\right)\|^2,
\end{equation}
which we will call the naive measurement consistency loss.
If at the first training steps, the network predicts a value $\hat{x}_j$ above the threshold, the gradient will be zero due to the $\eta$ operator and thus get stuck on this prediction, preventing the network from learning.
This is particularly problematic if $y_j$ is below the threshold, since we know that the true value is $x_j = y_j$.
We therefore choose to use the MSE where the signal is not saturated according to the following loss:
\begin{equation}
    \mathcal{L}_{\textrm{MC}}(\boldtheta)= \sum\limits^N_{i=1} \left\|\rho(f_{\boldtheta}(\y_i), \y_i)\right\|^2 \label{eq: loss mc}
\end{equation}
where $\rho$ is applied element-wise as:
\begin{equation*}
    \rho(a,b) = \mathbbm{1}_{\{|b|<\mu\}} |b-a| + \mathbbm{1}_{\{b=\mu\}} (b-a)_+ + \mathbbm{1}_{\{b=-\mu\}} (a-b)_+,
\end{equation*}
where $\mathbbm{1}(\cdot)$ the indicator function and $(\cdot)_+$ the positive function.
See \Cref{fig: naive MC vs new MC} for a comparison of these functions.

\begin{figure}[htbp]
    \centering
    \includegraphics{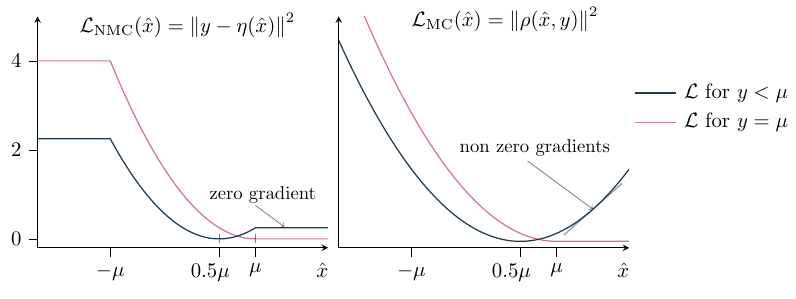}
    \caption{$\mathcal{L}_\textrm{NMC}$ compared to $\mathcal{L}_\textrm{MC}$. On the left: Curve of the $\mathcal{L}_\textrm{NMC}$ in one dimension for two examples in both cases saturated and not.
    On the right: Curve of the new $\mathcal{L}_\textrm{MC}$ for the same two examples.}
    \label{fig: naive MC vs new MC}
\end{figure}
However, even with this new loss, we cannot hope to \textcolor{reviews}{learn the correct reconstruction function}.
Minimizing $\mathcal{L}_{\textrm{MC}}$ allows us to learn a measurement consistent solution $f_{\boldtheta}$, but this solution \textcolor{reviews}{could} be of the form:
\begin{equation}
    f_{\boldtheta}(\boldsymbol{y})_j =
    \left\{\begin{array}{ccc}
        y_j & \text{ if } & |y_j| \leq \mu \\
         v(y_j) & \text{ if } & |y_j| =  \mu
    \end{array}\right.
\end{equation}
with $v$ \textcolor{reviews}{being \emph{any}} function lying beyond the threshold level $\mu$ i.e., $v(y_j)$ belongs to the preimage $\eta^{-1}(\{-\mu, \mu\})$.
For example, the function $f_{\boldtheta}(\y) = \y$ is a minimizer of \eqref{eq: loss mc}.

\textbf{Equivariance loss}:
To effectively learn in the saturated region, we use the amplitude invariance assumption of the signal set defined above, going beyond the limitations imposed by measurement consistency alone.

To enforce this invariance property on the reconstruction network, we note that since $g\x \in \X$, the function $f_{\boldtheta}$ must be capable of reconstructing both $\x$ and $g\x$ for all $\x$ in $\X$ and $g\in \R_+$, that is:
\begin{equation}
f_{\boldtheta}\left(\eta(g\x)\right) = g\x \quad \text{and} \quad f_{\boldtheta}(\eta(\x)) = \x.
\end{equation}
Thus, we can conclude that:
\begin{equation}
f_{\boldtheta}\big(\eta(g\x)\big) = g f_{\boldtheta}\big(\eta(\x)\big) \textrm{ for all } \x\in \X, \textrm{ and } g\in \R_+.
\end{equation}

Hence, the composition of $f_{\boldtheta}$ with $\eta$ should be equivariant with respect to the multiplicative group $(\mathbb{R}_+, \cdot)$.
To ensure this, we propose a loss function that enforces this equivariance property:
\begin{equation}
    \mathcal{L}_{\textrm{EI}}(\boldtheta) =  \sum_{i=1}^{N}\mathbb{E}_{g\sim p_g}\left[  \left\| gf_{\boldtheta}\left(\y_i\right) - f_{\boldtheta}\big(\eta\left(gf_{\boldtheta}\left(\y_i\right)\right)\big)\right\|^2\right]. \label{loss ei}
\end{equation}
We can observe a connection with \Cref{model identification}: for $N\to\infty$, after rescaling, the sum becomes an expectation over a distribution supported on $\Y$, and the expectation over $g$ basically acts as a (restricted) conic extension of that support.

The expectation is evaluated with respect to a specific distribution \( p_g \) over the multiplicative group, which, in principle, could be any unbounded distribution over \( \mathbb{R}_+ \).
However, in practical scenarios, signals are typically limited in amplitude, making it sufficient to use a distribution defined over a finite range \([g_{\textrm{min}}, g_{\textrm{max}}]\) (not necessarily uniform).
The parameters \((g_{\textrm{min}}, g_{\textrm{max}})\) were determined empirically: setting \( g_{\textrm{max}} \) too low prevents the network from learning effectively, while excessively large values cause instability.

We could also extend \eqref{loss ei} to include transformations not limited to scale invariance, such as affine group transformations of the form $\x \to a\x + b\boldsymbol{1}$,
where $(a,b) \in \mathbb{R}_+ \times \mathbb{R}$ and $\boldsymbol{1} = (1,\dots,1)^\top \in \R^n$.

The final loss function is a combination of the measurement consistency and equivariance losses:
    \begin{equation}
        \mathcal{L}(\boldtheta) = \mathcal{L}_\textrm{MC}(\boldtheta) + \lambda \mathcal{L}_{\textrm{EI}}(\boldtheta).
    \end{equation}
    The parameter $\lambda$ is a positive hyperparameter that controls the trade-off between the two losses.
\subsection{Network architecture}

\subsubsection{Core architecture}
The proposed self-supervised loss can be used with any network architecture $f_{\boldtheta}$.
For most of the experiments reported here, we choose the well-known U-Net neural network for the reconstruction function $f_{\boldtheta}$~\citep{ronneberger_u_2015}, which is well-suited for capturing the relevant (temporal or spatial) correlations in natural signals.
The details of the architecture vary depending on the type of data, audio or image, and are explained in the experiment section.

\subsubsection{Bias-free neural network} \label{sec: bias-free nn}
We choose a bias-free version of the U-Net, to preserve the scale invariance assumption~\citep{mohan_robust_2020}.
This is because a feedforward neural network with ReLU activation functions and without bias defines a homogeneous function:
\[ f_{\boldtheta}(g \y) = g f_{\boldtheta}(\y) \quad \forall g \in \R_+.\]
This gives $f_{\boldtheta}(g \y) \underset{g\to 0}{\longrightarrow} 0$, which is particularly necessary for unsaturated signals.
\begin{figure}[t]
    \centering
    \includegraphics{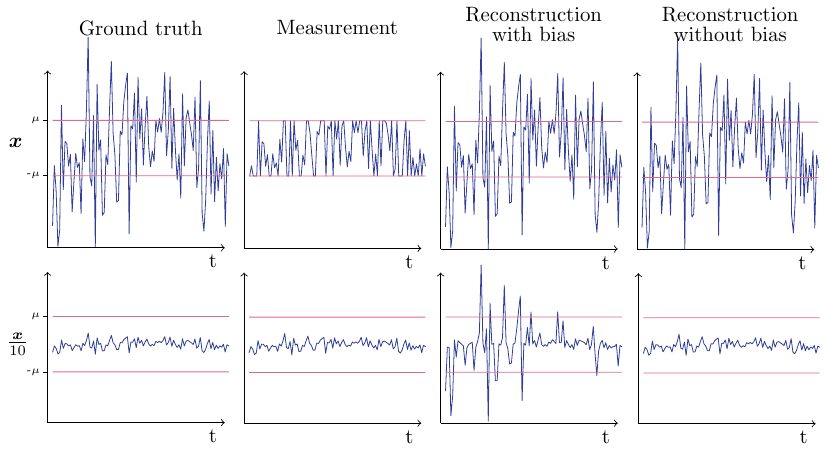}
    \caption{An example where the reconstruction is learned through the bias (third column) which prevents the reconstruction for low amplitude measurements.
    \textcolor{reviews2}{We thus observe that removing the bias makes the network naturally homogeneous and improves the reconstruction.}
    On the first row, we consider a signal $\x$, on the second row $\frac \x {10}$.}
    \label{fig: no bias-free nn}
\end{figure}
\Cref{fig: no bias-free nn} shows an example where a biased network cannot reconstruct an unsaturated signal.

\subsection{Masking}
A key aspect of the threshold operator is that, by knowing the threshold levels, we can identify the unsaturated, and therefore non-degraded, portion of the signal.
While the measurement consistency loss encourages the network to preserve the unsaturated part, it does not guarantee strict adherence to this constraint.
To address this limitation, we can enforce measurement consistency more rigorously by employing the following masking method, comparable to a more structured skip connection architecture.
The estimated signals $ \boldsymbol{\hat{x}}$ are computed using an element-wise linear blending between the outputs of the network $f_{\boldtheta}(\y)$ and the measurements, that is
\begin{equation}
    \hat{x}_j = (1-b_j)y_j + b_j f_{\boldtheta}(\y)_j.
    \label{eq: blending}
\end{equation}
\begin{figure}[htbp]
    \centering
     \includegraphics{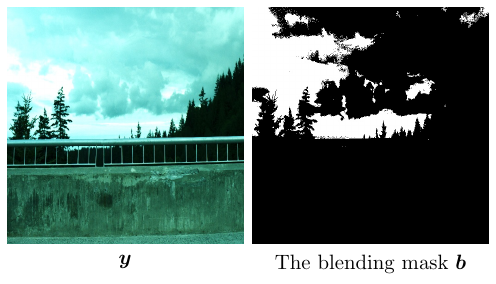}
    \caption{An image with its associated blending mask. The mask is white where a channel is saturated and black when none is saturated.}
    \label{fig: masking}
\end{figure}
By choosing an appropriate blend value $b_j$ for all sample points or pixels, the blend works as a mask that modifies only saturated measurements, avoiding artifacts in the unsaturated parts.
\textcolor{reviews}{\begin{equation*}
    b_j = \left\{
    \begin{array}{cc}
        0 &\textrm{if } |y_j| < \mu \\
        1 &\textrm{if } |y_j| = \mu
    \end{array}\right.
\end{equation*}}
An example of an image and its associated mask is shown in \Cref{fig: masking}.

    \section{Experiments} \label{sec:experiments}
    We begin by presenting experiments on toy datasets, where we can control the properties of the signal set.
The first experiment uses a dataset associated to a one-dimensional vector space that is invariant to amplitude.
The second experiment is performed using various datasets generated by sampling vectors from random $k$-dimensional subspaces.
To generate these datasets, we first control the dimension of the vector space, followed by the saturation proportion for each sampled vector.
The reconstruction is then evaluated based on these varying parameters.
Next, we conduct experiments on real-world music data, and we compare our method to several state-of-the-art techniques.
Finally, we conclude with experiments on two-dimensional signals, where we aim to reconstruct HDR images.

\subsection{Metrics} \label{subsec:metrics}
Two common metrics for evaluating the quality of reconstructions are the Signal-to-Distortion Ratio (SDR) and Peak Signal-to-Noise Ratio (PSNR).
PSNR is more suitable for image quality assessment, as it incorporates a maximum possible pixel value, allowing for a clearer comparison of signal fidelity.
SDR, on the other hand, is generally applied to audio signals because it does not require knowledge of the maximum possible value.
Although we do not know this value for HDR images, we still use PSNR to provide a more relevant comparison with other works.
The PSNR and SDR metrics are defined as follows:
\begin{equation}
    \textrm{PSNR}(\x, \boldsymbol{\hat{x}}) = 20\log_{10}\left( \dfrac{1}{\|\x -  \boldsymbol{\hat{x}}\|_2} \right),
\end{equation}
\begin{equation}
    \textrm{SDR}(\x, \boldsymbol{\hat{x}}) = 20\log_{10}\left( \dfrac{\|\x\|_2}{\|\x -  \boldsymbol{\hat{x}}\|_2} \right).
\end{equation}
Both metrics are referred to as distortion metrics since they assess the fidelity of the reconstruction process.
In addition to distortion metrics, perception metrics could be used to evaluate the quality of the reconstruction, offering a more accurate measure of how changes to a signal or image affect human perception.
\textcolor{reviews}{In this work, we employ the widely adopted metric Perceptual Evaluation of Speech Quality (PESQ) to evaluate audio signals.
For image, we use the Natural Image Quality Evaluator (NIQE), a no-reference (no ground-truth needed) metric that estimates image quality.}
While perceptual metrics are particularly relevant for applications like music audio, where subjective experience plays a key role, they are less suitable for signals where preserving critical information is essential, as they tend to permit greater hallucination
- introducing elements that were not part of the original signal, potentially compromising the accuracy of the reconstruction.
\citet{blau_perception_2018} study the impossibility of algorithms to optimize both perception and distortion metrics, known as the perception-distortion tradeoff.

\subsection{Toy datasets}
\subsubsection{MNIST experiment} \label{subsubsec:MNIST}
In this example, we use a toy signal set, generated from a single signal and augmented with its scaled versions, i.e. $\X = \left\{ e\x_0 : e\in \R^*_+\right\}$ for $\x_0$ fixed.
In practice, we take an image $\x$ from MNIST and create the dataset $\mathcal{D} = \left\{e_i \x_0 \right\}^N_{i=1}$ with $e_i$ a realization of a random variable with exponential distribution \textcolor{reviews}{of mean $2$}, for $i\in \left\{ 1,\dots, N\right\}$. 
This dataset has two interesting properties: it is scale invariant, and we cannot have unique signal recovery as shown in \Cref{fig: MNIST no one-to-one}.

\begin{figure}[htbp]
    \centering
    \includegraphics{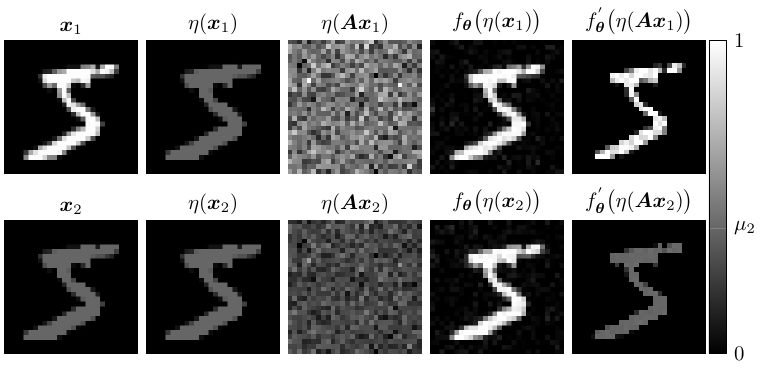}
    \caption{Example where two different signals $\x_1, \x_2$ have the same measurement $\eta(\x_1), \eta(\x_2)$ and so the network $f_{\boldtheta}$ fails to recover them both (the second and fourth columns).
    Adding randomness implies that the two measurements are not equal anymore, and so a network $f^{'}_{\boldtheta}$ can reconstruct the original signal (the third and fifth columns).}
    \label{fig: MNIST no one-to-one}
\end{figure}

Measurements are obtained by applying the forward function $\eta(\A\cdot)$  with $\A$ an orthogonal matrix and threshold levels $\mu_1 = 0, \ \mu_2 = 0.4$.
The matrix $\A$ is drawn randomly with respect to the Haar measure~\citep{mezzadri_how_2007}.
While \Cref{fig: MNIST no one-to-one} shows that it is impossible without $\A$ to recover the original signals, adding randomness in the forward operator solves this issue, as illustrated in \Cref{fig: cube one to one}.
\Cref{fig: plot dynamic range} shows that the neural network trained using the loss function $\mathcal{L}_{\textrm{MC}} + \mathcal{L}_{\textrm{EI}}$ successfully reconstructs images with a dynamic range similar to the original images,
whereas the network trained with $\mathcal{L}_{\textrm{NMC}} + \mathcal{L}_{\textrm{EI}}$ as the loss function does not achieve this.
However, both fail when the dynamic range of the original signal becomes excessively high.
For signals with high dynamic range, characterized by a large norm $\| \cdot\|_2$, the network is unable to recover them as indicated in \Cref{Grand theoreme}.

\begin{figure}[!ht]
    \centering
    \includegraphics[scale=1]{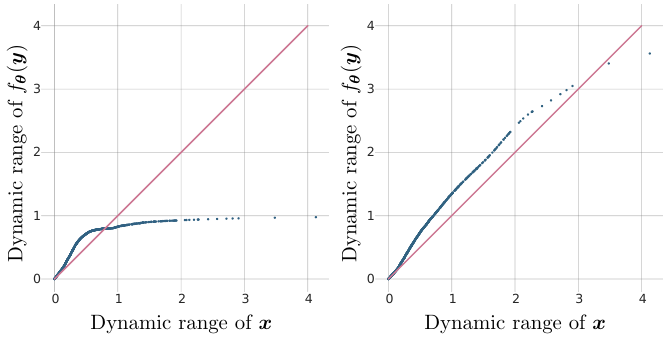}
    \caption{Plot of the dynamic range of $f_{\boldtheta}(\y)$ depending on dynamic range of $\x$, where $\y = \eta(\A \x)$.
    Left: The training is done with the losses $\mathcal{L}_{\textrm{NMC}} + \mathcal{L}_{\textrm{EI}}$.
    Right: The training is done with loss $\mathcal{L}_{\textrm{MC}} + \mathcal{L}_{\textrm{EI}}$.
    A perfect plot should be the identity line.}
    \label{fig: plot dynamic range}
\end{figure}

\subsubsection{Synthetic dataset} \label{subsubsec:synthetic dataset}
First, we generate a random subspace of $\mathbb{R}^{100}$ with dimension $k \in \{1, \dots, 100\}$ using basis vectors whose coordinates are drawn from a standard normal distribution.
Next, we generate $N = 1000$ vectors $\{\x_i\}_{i=1}^N$ within this subspace, rescaling each vector such that a proportion $v \in (0, 1)$ of their components are clipped with the threshold $\mu$ set to 1.

We evaluate the recovery performance as a function of the parameters $k$ and $v$.
\Cref{fig: tab sdr} illustrates the network's ability to reconstruct unsaturated signals as these parameters vary.
The performance decreases as the subspace dimension grows or the proportion of saturated samples increases.
Indeed, signals from higher-dimensional subspaces are more challenging to reconstruct, which is consistent with the result of \Cref{Grand theoreme}.

\begin{figure}[htbp]
    \centering
    \includegraphics[width=8cm]{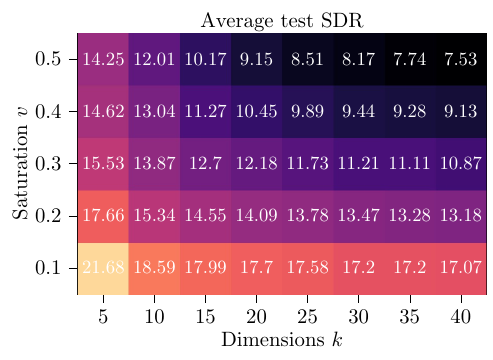}  
    \includegraphics[width=8cm]{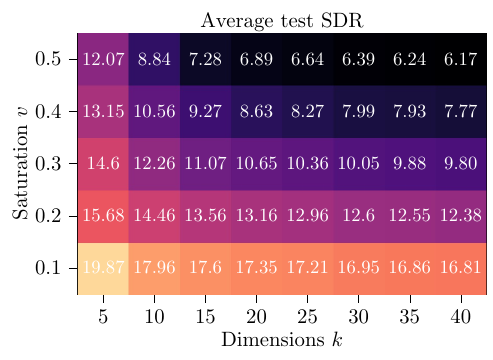}
    \caption{Average reconstruction performance as a function of saturated part $v$ and signal set dimension $k$ for both supervised (left) and self-supervised (right) methods. The value indicated corresponds to the mean $\textrm{SDR}$ over the test dataset.}
    \label{fig: tab sdr}
\end{figure}

\subsection{Audio} \label{subsec:audio}
We evaluate the method on a real audio dataset.
Specifically, we use the GTZAN dataset, which contains 10 genres (rock, classical, jazz, etc.), each represented by 100 audio files, all 30 seconds long with a sampling rate of 22,050 Hz.
The audio signals are split into 30 segments of 1-second each and then saturated with a threshold level $\mu = 0.1$.
Measurements $\y$ without any saturated entries are discarded.
This process results in a training dataset of 21898 audio samples and a testing dataset of 100 samples.

The distribution of $p_g$ is set uniformly on the interval $[0.1, 2]$.
Results are compared with a state-of-the-art variational method called social sparsity~\citep{zaviska_survey_2021, siedenburg_audio_2014} and supervised methods.
Both the supervised and self-supervised approaches use the same network architecture for training.
We refer to the method trained using only 5\% of the whole training dataset as ``Supervised 5\%'', which is used to compare the performance of the both approaches under limited ground-truth data acquisition.
\textcolor{reviews}{We also compare our method with the following methods: self-supervised with loss $\mathcal{L}_{\textrm{MC}}$ alone, supervised with MSE + $\mathcal{L}_{\textrm{EI}}$.}
As shown in~\Cref{tab: SDR music}, the proposed self-supervised method achieves performance comparable to the fully supervised approach, \textcolor{reviews}{although the latter requires significantly less training time: 7 hour 46 \textcolor{reviews2}{minutes} compared to 13 hours 52 \textcolor{reviews2}{minutes} for the self-supervised method.}
\textcolor{reviews}{The results further confirm that training with the MC loss alone does not yield additional information beyond the measurements themselves, as anticipated in \Cref{sec:Self-supervised-learning-approach}.
\textcolor{reviews2}{Similarly, the proposed method with a biased network does not perform as well, illustrating the importance of using a bias-free architecture as seen in \Cref{sec: bias-free nn}.}
In addition, the supervised method with equivariance does not demonstrate any significant improvement over the standard supervised approach.}

\begin{table}[htbp]
    \centering
    \vspace{-0.5cm}
    \caption{Reconstruction performance, SDR \textcolor{reviews}{and PESQ} are averaged over all music test dataset.}
    \begin{tabular}{c|c|c}
            Methods & SDR $\uparrow$ & \textcolor{reviews}{PESQ $\downarrow$} \\ \hline
            Identity & $ 4.84 \pm 2.12$ & \textcolor{reviews}{$2.7 \pm 0.65$} \\
            Social Sparsity  &  $9.92 \pm 4.46$ & \textcolor{reviews}{$2.09 \pm 0.95$} \\
            Supervised & $11.69 \pm 2.25$ & \textcolor{reviews}{$1.94 \pm 0.75$} \\
            Supervised 5\% & $9.79 \pm 1.59$ & \textcolor{reviews}{$2.28 \pm 0.78$} \\
            \textcolor{reviews}{Supervised + $\mathcal{L}_{\textrm{EI}}$} & \textcolor{reviews}{$11.72 \pm 2.23$} & \textcolor{reviews}{$1.92 \pm 0.74$} \\
            Proposed self-supervised & $10.48 \pm 2.20$ & \textcolor{reviews}{$2.20 \pm 0.80$} \\
            \textcolor{reviews}{Unsupervised with $\mathcal{L}_{\textrm{MC}}$ alone} & \textcolor{reviews}{$4.84 \pm 2.12$} & \textcolor{reviews}{$2.68 \pm 0.65$}\\
            \textcolor{reviews}{Proposed self-supervised with bias} & \textcolor{reviews}{$9.27 \pm 3.22$} & \textcolor{reviews}{$2.43 \pm 0.90$} \\ \hline
            \hline
    \end{tabular}
    \label{tab: SDR music}
\end{table}

To demonstrate that a supervised training dataset may not generalize well to a different test set, we conduct the following experiment: we create a supervised training dataset consisting of only music, while the test dataset includes both music and voice recordings~\citep{vryzas_speech_2018, vryzas_subjective_2018}, with only measurement data available (no ground truth).
The supervised method, trained using the MSE loss~\eqref{mse}, learns solely from the training dataset, whereas the self-supervised method is trained on both the training and test datasets since it does not require ground truth.
Both methods are then evaluated on the test dataset.
The results, shown in~\Cref{tab: robust self-sup}, demonstrate that the self-supervised method is more robust when the training and test datasets differ.
\begin{table}[htbp]
    \centering
    \caption{Average SDR performance on the test dataset, which includes both music and voice recordings.}
    \begin{tabular}{c|c}
            Methods & $\textrm{SDR (dB)}$ \\ \hline
            Identity & $6.54 \pm 2.34$ \\
            Supervised \textcolor{reviews}{(trained on music)} & $10.94 \pm 2.00$ \\
            Proposed self-supervised \textcolor{reviews}{(trained on music and voice)} & $11.92 \pm 2.46$ \\ \hline
    \end{tabular}
    \label{tab: robust self-sup}
\end{table}

\begin{table}[htbp]
    \centering
    \caption{Average SDR \textcolor{reviews}{and PESQ} performance on the test \textcolor{reviews}{MAESTRO} dataset.}
    \begin{tabular}{c|c|c}
        Methods & SDR $\uparrow$ & \textcolor{reviews}{PESQ $\downarrow$} \\ \hline
        Identity & $5.00 \pm 0$ & \textcolor{reviews}{$2.9 \pm 0.73$} \\
        Supervised & $6.70 \pm 1.69$ & \textcolor{reviews}{$2.22 \pm 0.52$} \\
        Proposed self-supervised &$7.78 \pm 8.64$ & \textcolor{reviews}{$2.47 \pm 1.12$} \\ \hline
    \end{tabular}
    \label{tab: sup article vs unsup}
\end{table}

We also compare the self-supervised learning method to a supervised learning approach using diffusion models~\citep{moliner_solving_2023}.
\textcolor{reviews}{The comparison is conducted on a portion of the MAESTRO dataset~\citep{hawthorne_enabling_2018}, used in Moliner's work, which consists of 200 hours of classical solo piano recordings.
The measurement signals from \citet{moliner_solving_2023} are used as a benchmark; although the saturation threshold varies for each signal, it consistently corresponds to an SDR value of $5$.
Each signal is then normalized to yield a saturation threshold of 0.1, while preserving a constant SDR of $5$.
This approach enables comparison between the proposed method, based on a fixed saturation threshold, and Moliner's trained method.}
Results demonstrate that the self-supervised method outperforms the supervised one \Cref{tab: sup article vs unsup}.
It is important to note that the proposed self-supervised method focuses on optimizing the distortion metric SDR, whereas this supervised method aims to optimize a perceptual metric \textcolor{reviews}{PESQ}.
Results are then consistent with the perception-distortion tradeoff seen in~\Cref{subsec:metrics}.

\subsection{HDR} \label{subsec:HDR}

\begin{figure}[htbp]
    \centering
    \includegraphics{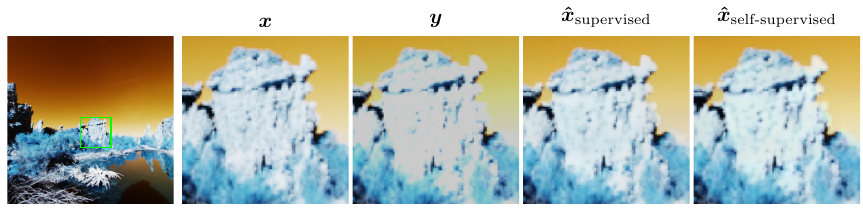}
    \includegraphics{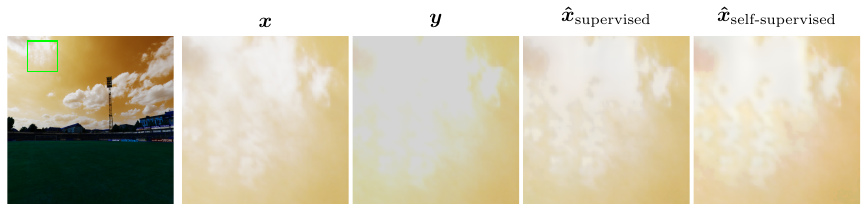}
    \caption{Image reconstruction on the test dataset. An exposition correction~\citep{wikipedia_Exposure_nodate} is applied to see details better.}
    \label{fig: images example}
\end{figure}

The HDR experiment is inspired by~\citet{eilertsen_hdr_2017}, and its goal is to compare supervised and self-supervised methods.
Results are summarized in \Cref{tab: PSNR} and in \Cref{fig: images example}.
The dataset is composed of $1043$ HDR images noted $\ub$ \citep{le_single-image_2023}, which are considered as real scenes.
Photos are taken with a virtual camera calibrated by a camera curve and exposure.
For each image, we choose the exposure time to saturate between 5\% and 15\% of the pixels.
We divide the image by the quantile $q_{v}$ where $v$ is chosen uniformly in $\left( 0.85 \ , \ 0.95\right)$.
Then we have $\mathbb{P}\left(\frac{\ub}{q_{v}} >1\right) = 1-v$ if $\mathbb{P}$ is the cumulative histogram of $\ub$.
The camera curve is defined as
\[\omega (\ub)_j =  (1+\sigma)\dfrac{u_j^\beta}{u_i^\beta+\sigma},\]
We use these functions to fit the database of real camera curves collected by~\citet{grossberg_what_2003}.
We choose $\beta \sim \mathcal{N}(0.9, 0.1)$ and $\sigma \sim \mathcal{N}(0.6, 0.1)$.
The images $\x = \omega\left(\frac{\ub}{q_v}\right)$ represent the ground truth dataset.
They are quantized and clipped to create the measurement dataset.
This step is performed with:
\begin{equation}
    y_j = \frac{\lfloor 255\min(1, x_j) + 0.5 \rfloor}{255}.
    \label{eq: quantize}
\end{equation}

We use an unfolded network with a U-Net architecture.
It combines traditional iterative optimization with neural network learning.
For each iteration, we replace a part of the optimization step with a neural network.
We choose the Half-Quadratic Splitting (HQS)~\citep{aggarwal_modl_2018} algorithm which aims to minimize \(\varphi(x) + \lambda \psi(x)\), \textcolor{reviews2}{for two closed proper convex functions $\varphi$ and $\psi$}.
The iteration step is given by:
\begin{equation*}
    \begin{aligned}
    u_{k} &= \operatorname{prox}_{\gamma \varphi}(x_k), \\
    x_{k+1} &= \operatorname{prox}_{\sigma \lambda \psi}(u_k).
    \end{aligned}
\end{equation*}
\textcolor{reviews}{In our setting,} $\varphi$ corresponds to the measurement consistency term $\varphi(\x) = \| \rho (\x,\y) \|^2 $ and $\operatorname{prox}_{\sigma \lambda \psi}(u_k)$ is replaced by a neural network.
The proximal operator of the measurement consistency term is given by:
\begin{equation*}
        \operatorname{prox}_{\gamma \varphi}(\x)_j =
        \left\{ \begin{array}{l}
                            \frac{x_j+\gamma y_j}{1+\gamma} \quad \text{if} \quad |x_j| \leq \mu, \\
                            x_j \quad \text{otherwise}
                        \end{array}
        \right.
\end{equation*}
for $j \in \{1,\dots, n\}$.

\begin{table}[htbp]
    \centering
    \caption{Reconstruction performance, PSNR is averaged over all test dataset images.}
    \begin{tabular}{c|c|c}
            Methods & PSNR $\uparrow$ & \textcolor{reviews}{NIQE $\downarrow$} \\ \hline
            Identity & $ 29.54 \pm 6.95$ & \textcolor{reviews}{$4.21 \pm 1.96$} \\  
            \textcolor{reviews}{PnP DPIR} & \textcolor{reviews}{$32.11 \pm 5.80$} & \textcolor{reviews}{$4.06 \pm 1.93$} \\   
            Supervised & $36.67 \pm 6.19$ & \textcolor{reviews}{$4.04 \pm 1.74$}\\  
            Proposed self-supervised & $35.04 \pm 6.33$ & \textcolor{reviews}{$4.10 \pm 1.85$} \\ \hline  
    \end{tabular}
    \label{tab: PSNR}
\end{table}

In \Cref{tab: PSNR}, we compare our method against two reference approaches: the supervised method from~\citet{eilertsen_hdr_2017} and the Plug-and-Play (PnP) method~\citep{venkatakrishnan_plug_2013}. 
The PnP method is an iterative technique that integrates a denoising module into the optimization process, replacing the proximal operator.
\textcolor{reviews}{We used the Deep Plug-and-Play Image Restoration (DPIR) method~\citep{zhang_plug_2021}, a PnP approach based on HQS in which the denoising network is a pre-trained DRUNet, and the noise level for the denoiser is adapted at each iteration.
For better results, we modified the initialization of the algorithm -- which is generally set to $\y$ -- by adding a white Gaussian noise of standard deviation $\sigma = 0.18$, which we found to better avoid local minima close to $\y$.
In this way, DPIR outperforms the other PnP methods tested.}
Results in \Cref{tab: PSNR}, demonstrate that the self-supervised method performs on par with the supervised method and significantly outperforms the DPIR approach.
As illustrated in \Cref{fig: images example}, both the supervised and self-supervised methods effectively recover details in the images.

    \section{Conclusion} \label{sec:conclusion}
    In this study, we propose a self-supervised method that uses amplitude invariance to address the nonlinear declipping problem.
    We provide a theoretical framework with guarantees for model identification of scale invariant signal models and unique recovery of the unsaturated signals when the operator involves random matrices with coefficients following a Gaussian distribution.
    Experimental results, carried out on audio and images, indicate that this method can perform on par with the supervised approach, and surpasses variational methods.
    This work opens up new possibilities for learning-based methods in nonlinear inverse problems.

    \section{Acknowledgment}
    Victor Sechaud and Julian Tachella are supported by the ANR grant UNLIP (ANR-23-CE23-0013).
    The reserch of Laurent Jacques is partly funded by FRS-FNRS (QuadSense, T.0160.24). 
    Part of this research was carried out when LJ was supported by the OCKHAM team at ENS Lyon (France), 
    with an INRIA Chair at the Institute for Advanced Study, Collegium of Lyon.

    \bibliography{biblio}

@article{puy_recipes_2017,
	title = {Recipes for {Stable} {Linear} {Embeddings} {From} {Hilbert} {Spaces} to {$\mathbb{R}^m$}},
	volume = {63},
	copyright = {https://ieeexplore.ieee.org/Xplorehelp/downloads/license-information/IEEE.html},
	issn = {0018-9448, 1557-9654},
	language = {en},
	number = {4},
	urldate = {2024-10-01},
	journal = {IEEE Transactions on Information Theory},
	author = {Puy, Gilles and Davies, Mike E. and Gribonval, Remi},
	month = apr,
	year = {2017},
	pages = {2171--2187},
}

@article{foucart_sparse_2016,
	title = {Sparse recovery from saturated measurements},
	issn = {2049-8764, 2049-8772},
	language = {en},
	urldate = {2024-10-01},
	journal = {Information and Inference},
	author = {Foucart, Simon and Needham, Tom},
	month = dec,
	year = {2016},
	pages = {iaw020},
}

@article{zaviska_survey_2021,
	title = {A {Survey} and an {Extensive} {Evaluation} of {Popular} {Audio} {Declipping} {Methods}},
	volume = {15},
	copyright = {https://ieeexplore.ieee.org/Xplorehelp/downloads/license-information/IEEE.html},
	issn = {1932-4553, 1941-0484},
	language = {en},
	number = {1},
	urldate = {2024-10-01},
	journal = {IEEE Journal of Selected Topics in Signal Processing},
	author = {Zaviska, Pavel and Rajmic, Pavel and Ozerov, Alexey and Rencker, Lucas},
	month = jan,
	year = {2021},
	pages = {5--24},
}

@inproceedings{sechaud_equivariance_2024,
  title={Equivariance-based self-supervised learning for audio signal recovery from clipped measurements},
  author={Sechaud, Victor and Jacques, Laurent and Abry, Patrice and Tachella, Juli{\'a}n},
   booktitle={2024 32nd European Signal Processing Conference (EUSIPCO)},
  pages={852--856},
  year={2024},
  organization={IEEE}
}

@inproceedings{tachella_learning_2023,
title={Learning to Reconstruct Signals From Binary Measurements},
author={Juli{\'a}n Tachella and Laurent Jacques},
booktitle={Fourteenth International Conference on Sampling Theory and Applications},
year={2023},
}

@article{shechtman_phase_2015,
	title = {Phase {Retrieval} with {Application} to {Optical} {Imaging}: {A} contemporary overview},
	volume = {32},
	copyright = {https://ieeexplore.ieee.org/Xplorehelp/downloads/license-information/IEEE.html},
	issn = {1053-5888},
	shorttitle = {Phase {Retrieval} with {Application} to {Optical} {Imaging}},
	language = {en},
	number = {3},
	urldate = {2024-10-01},
	journal = {IEEE Signal Processing Magazine},
	author = {Shechtman, Yoav and Eldar, Yonina C. and Cohen, Oren and Chapman, Henry Nicholas and Miao, Jianwei and Segev, Mordechai},
	month = may,
	year = {2015},
	pages = {87--109},
}

@inproceedings{moliner_solving_2023,
	title = {Solving {Audio} {Inverse} {Problems} with a {Diffusion} {Model}},
	language = {en},
	booktitle={ICASSP 2023-2023 IEEE International Conference on Acoustics, Speech and Signal Processing (ICASSP)},
	author = {Moliner, Eloi and Lehtinen, Jaakko and Välimäki, Vesa},
	month = mar,
	year = {2023},
	organization={IEEE}
}

@inproceedings{siedenburg_audio_2014,
	address = {Florence, Italy},
	title = {Audio declipping with social sparsity},
	isbn = {978-1-4799-2893-4},
	language = {en},
	urldate = {2024-10-01},
	booktitle = {2014 {IEEE} {International} {Conference} on {Acoustics}, {Speech} and {Signal} {Processing} ({ICASSP})},
	publisher = {IEEE},
	author = {Siedenburg, Kai and Kowalski, Matthieu and Dorfler, Monika},
	month = may,
	year = {2014},
	pages = {1577--1581},
}

@misc{scanvic_self-supervised_2024,
	title = {Self-{Supervised} {Learning} for {Image} {Super}-{Resolution} and {Deblurring}},
	language = {en},
	urldate = {2024-10-01},
	publisher = {arXiv},
	author = {Scanvic, Jérémy and Davies, Mike and Abry, Patrice and Tachella, Julián},
	month = mar,
	year = {2024},
}

@article{tachella_sensing_2023,
  title={Sensing theorems for unsupervised learning in linear inverse problems},
  author={Tachella, Juli{\'a}n and Chen, Dongdong and Davies, Mike},
  journal={Journal of Machine Learning Research},
  volume={24},
  number={39},
  pages={1--45},
  year={2023}
}

@inproceedings{chen_equivariant_2021,
	title = {Equivariant {Imaging}: {Learning} {Beyond} the {Range} {Space}},
	shorttitle = {Equivariant {Imaging}},
	language = {en},
	booktitle={Proceedings of the IEEE/CVF International Conference on Computer Vision},
  	pages={4379--4388},
	author = {Chen, Dongdong and Tachella, Julián and Davies, Mike E.},
	month = aug,
	year = {2021},
}

@inproceedings{mohan_robust_2020,
	title = {Robust and interpretable blind image denoising via bias-free convolutional neural networks},
	language = {en},
	booktitle={International Conference on Learning Representations},
	year={2020},
	author = {Mohan, Sreyas and Kadkhodaie, Zahra and Simoncelli, Eero P. and Fernandez-Granda, Carlos},
	month = feb,
}

@article{candes_introduction_2008,
	title = {An {Introduction} {To} {Compressive} {Sampling}},
	volume = {25},
	copyright = {https://ieeexplore.ieee.org/Xplorehelp/downloads/license-information/IEEE.html},
	issn = {1053-5888},
	language = {en},
	number = {2},
	urldate = {2024-10-02},
	journal = {IEEE Signal Processing Magazine},
	author = {Candes, E.J. and Wakin, M.B.},
	month = mar,
	year = {2008},
	pages = {21--30},
}

@article{eilertsen_hdr_2017,
	title = {{HDR} image reconstruction from a single exposure using deep {CNNs}},
	volume = {36},
	issn = {0730-0301, 1557-7368},
	language = {en},
	number = {6},
	urldate = {2024-10-02},
	journal = {ACM Transactions on Graphics},
	author = {Eilertsen, Gabriel and Kronander, Joel and Denes, Gyorgy and Mantiuk, Rafał K. and Unger, Jonas},
	month = dec,
	year = {2017},
	pages = {1--15},
}

@inproceedings{mezzadri_how_2007,
	title = {How to generate random matrices from the classical compact groups},
	language = {en},
	urldate = {2024-10-02},
	journal = {Notices of the American Mathematical Society},
	author = {Mezzadri, Francesco},
	month = feb,
	year = {2007},
	volume = {54},
	pages = {592 -- 604},
}

@inproceedings{mertens_exposure_2007,
	address = {Maui, HI},
	title = {Exposure {Fusion}},
	isbn = {978-0-7695-3009-3},
	language = {en},
	urldate = {2024-10-02},
	booktitle = {15th {Pacific} {Conference} on {Computer} {Graphics} and {Applications} ({PG}'07)},
	publisher = {IEEE},
	author = {Mertens, Tom and Kautz, Jan and Van Reeth, Frank},
	month = oct,
	year = {2007},
	pages = {382--390},
}

@article{chen_exact_2015,
	title = {Exact and {Stable} {Covariance} {Estimation} from {Quadratic} {Sampling} via {Convex} {Programming}},
	author={Chen, Yuxin and Chi, Yuejie and Goldsmith, Andrea J},
	journal={IEEE Transactions on Information Theory},
	volume={61},
	number={7},
	pages={4034--4059},
	year={2015},
	publisher={IEEE}
}

@article{gleichman_blind_2010,
  title = {Blind Compressed Sensing},
  author = {Gleichman, Sivan and Eldar, Yonina C},
  year = {2011},
  journal = {IEEE Transactions on Information Theory},
  volume = {57},
  number = {10},
  pages = {6958--6975},
  publisher = {IEEE}
}

@inproceedings{grossberg_what_2003,
	title = {What is the {Space} of {Camera} {Response} {Functions}?},
	volume = {2},
	booktitle={2003 IEEE Computer Society Conference on Computer Vision and Pattern Recognition, 2003. Proceedings.},
	author = {Grossberg, Michael and Nayar, S.K.},
	month = jul,
	year = {2003},
	pages = {II -- 602},
	organization={IEEE}
}

@article{sauer_embedology_1991,
	title = {Embedology},
	volume = {65},
	number = {3-4},
	journal = {Journal of Statistical Physics},
	author = {Sauer, Tim and Yorke, James A. and Casdagli, Martin},
	month = nov,
	year = {1991},
	pages = {579--616},
}

@article{lustig_sparse_2007,
	title = {Sparse {MRI}: {The} application of compressed sensing for rapid {MR} imaging},
	volume = {58},
	number = {6},
	journal = {Magnetic Resonance in Medicine: An Official Journal of the International Society for Magnetic Resonance in Medicine},
	author = {Lustig, Michael and Donoho, David and Pauly, John M},
	year = {2007},
	pages = {1182--1195},
}

@article{rudin_nonlinear_1992,
	title = {Nonlinear total variation based noise removal algorithms},
	volume = {60},
	number = {1-4},
	journal = {Physica D: nonlinear phenomena},
	author = {Rudin, Leonid I and Osher, Stanley and Fatemi, Emad},
	year = {1992},
	pages = {259--268},
}

@book{boyd_convex_2004,
	title = {Convex optimization},
	publisher = {Cambridge university press},
	author = {Boyd, Stephen P and Vandenberghe, Lieven},
	year = {2004},
}

@inproceedings{le_single-image_2023,
	title = {Single-{Image} {HDR} {Reconstruction} by {Multi}-{Exposure} {Generation}},
	booktitle = {Proceedings of the {IEEE}/{CVF} {Winter} {Conference} on {Applications} of {Computer} {Vision} ({WACV})},
	author = {Le, Phuoc-Hieu and Le, Quynh and Nguyen, Rang and Hua, Binh-Son},
	month = jan,
	year = {2023},
}

@article{defraene_declipping_2013,
	title = {Declipping of audio signals using perceptual compressed sensing},
	volume = {21},
	number = {12},
	journal = {IEEE Transactions on Audio, Speech, and Language Processing},
	author = {Defraene, Bruno and Mansour, Naim and De Hertogh, Steven and Van Waterschoot, Toon and Diehl, Moritz and Moonen, Marc},
	year = {2013},
	pages = {2627--2637},
}

@inproceedings{ozerov_multichannel_2016,
	title = {Multichannel audio declipping},
	booktitle = {2016 {IEEE} {International} {Conference} on {Acoustics}, {Speech} and {Signal} {Processing} ({ICASSP})},
	publisher = {IEEE},
	author = {Ozerov, Alexey and Bilen, {\c{C}}a{\u{g}}da{\c{s}} and P{\'e}rez, Patrick},
	year = {2016},
	pages = {659--663},
}

@inproceedings{gaultier_cascade_2018,
	title = {Cascade: {Channel}-aware structured cosparse audio declipper},
	booktitle = {2018 {IEEE} {International} {Conference} on {Acoustics}, {Speech} and {Signal} {Processing} ({ICASSP})},
	publisher = {IEEE},
	author = {Gaultier, Clément and Bertin, Nancy and Gribonval, Rémi},
	year = {2018},
	pages = {571--575},
}

@article{lehtinen_noise2noise_2018,
	title = {{Noise2Noise}: {Learning} image restoration without clean data},
	journal = {International Conference on Machine Learning},
	author = {Lehtinen, Jaakko and Munkberg, Jacob and Hasselgren, Jon and Laine, Samuli and Karras, Tero and Aittala, Miika and Aila, Timo},
	year = {2018},
}

@inproceedings{krull_noise2void-learning_2019,
	title = {Noise2void-learning denoising from single noisy images},
	booktitle = {Proceedings of the {IEEE}/{CVF} conference on computer vision and pattern recognition},
	author = {Krull, Alexander and Buchholz, Tim-Oliver and Jug, Florian},
	year = {2019},
	pages = {2129--2137},
}

@article{vryzas_speech_2018,
	title = {Speech emotion recognition for performance interaction},
	volume = {66},
	number = {6},
	journal = {Journal of the Audio Engineering Society},
	author = {Vryzas, Nikolaos and Kotsakis, Rigas and Liatsou, Aikaterini and Dimoulas, Charalampos A and Kalliris, George},
	year = {2018},
	pages = {457--467},
}

@incollection{vryzas_subjective_2018,
	title = {Subjective evaluation of a speech emotion recognition interaction framework},
	booktitle = {Proceedings of the {Audio} {Mostly} 2018 on {Sound} in {Immersion} and {Emotion}},
	author = {Vryzas, Nikolaos and Matsiola, Maria and Kotsakis, Rigas and Dimoulas, Charalampos and Kalliris, George},
	year = {2018},
	pages = {1--7},
}

@inproceedings{kitic_consistent_2013,
	title = {Consistent {Iterative} {Hard} {Thresholding} for {Signal} {Declipping}},
	language = {Anglais},
	booktitle = {Acoustics, {Speech} and {Signal} {Processing} ({ICASSP}), 2013 {IEEE} {International} {Conference} on},
	publisher = {IEEE},
	author = {Kitic, Srdjan and Jacques, Laurent and Madhu, Nilesh and Hopwood, Michael Peter and Spriet, Ann and De Vleeschouwer, Christophe},
	year = {2013},
}

@article{gaultier_sparsity-based_2021,
	title = {Sparsity-based audio declipping methods: {Selected} overview, new algorithms, and large-scale evaluation},
	volume = {29},
	journal = {IEEE/ACM Transactions on Audio, Speech, and Language Processing},
	author = {Gaultier, Clément and Kitić, Srđjan and Gribonval, Rémi and Bertin, Nancy},
	year = {2021},
	pages = {1174--1187},
}

@book{vershynin_high-dimensional_2018,
	title = {High-dimensional probability: {An} introduction with applications in data science},
	volume = {47},
	publisher = {Cambridge university press},
	author = {Vershynin, Roman},
	year = {2018},
}

@book{robinson_dimensions_2010,
	title = {Dimensions, embeddings, and attractors},
	volume = {186},
	publisher = {Cambridge University Press},
	author = {Robinson, James C},
	year = {2010},
}

@misc{belthangady_applications_2019,
	title = {Applications, {Promises}, and {Pitfalls} of {Deep} {Learning} for {Fluorescence} {Image} {Reconstruction}},
	copyright = {http://creativecommons.org/licenses/by/4.0},
	language = {en},
	journal={Nature methods},
	volume={16},
	number={12},
	pages={1215--1225},
	author = {Belthangady, Chinmay and Royer, Loic A.},
	month = feb,
	year = {2019},
	publisher={Nature Publishing Group US New York}
}

@article{jin_bpconvnet_2016,
  title={BPConvNet for compressed sensing recovery in bioimaging},
  author={Jin, Kyong Hwan and McCann, Michael T and Unser, Michael},
  year={2016}
}

@inproceedings{contreras_high_2024,
  title={High Dynamic Range Modulo Imaging for Robust Object Detection in Autonomous Driving},
  author={Contreras, Kebin and Monroy, Brayan and Bacca, Jorge Luis},
  booktitle={ROAM ECCV 2024},
  year={2024}
  
}

@inproceedings{blau_perception_2018,
  title = {The {{Perception-Distortion Tradeoff}}},
  booktitle = {2018 {{IEEE}}/{{CVF Conference}} on {{Computer Vision}} and {{Pattern Recognition}}},
  author = {Blau, Yochai and Michaeli, Tomer},
  year = {2018},
  month = jun,
  eprint = {1711.06077},
  primaryclass = {cs},
  pages = {6228--6237},
  urldate = {2024-11-04},
  archiveprefix = {arXiv},
  langid = {english},
}

@article{aggarwal_modl_2018,
	title = {{MoDL}: {Model}-based deep learning architecture for inverse problems},
	volume = {38},
	number = {2},
	journal = {IEEE transactions on medical imaging},
	author = {Aggarwal, Hemant K and Mani, Merry P and Jacob, Mathews},
	year = {2018},
	pages = {394--405},
}

@inproceedings{venkatakrishnan_plug_2013,
  title={Plug-and-play priors for model based reconstruction},
  author={Venkatakrishnan, Singanallur V and Bouman, Charles A and Wohlberg, Brendt},
  booktitle={2013 IEEE global conference on signal and information processing},
  pages={945--948},
  year={2013},
  organization={IEEE}
}

@book{falconer_fractal_2013,
  title={Fractal geometry: mathematical foundations and applications},
  author={Falconer, Kenneth},
  year={2013},
  publisher={John Wiley \& Sons}
}

@misc{wikipedia_Exposure_nodate,
author = {Wikipedia},
title = {Exposure Value},
howpublished = {\url{https://en.wikipedia.org/wiki/Exposure_value}},
}

@inproceedings{ronneberger_u_2015,
  title={U-net: Convolutional networks for biomedical image segmentation},
  author={Ronneberger, Olaf and Fischer, Philipp and Brox, Thomas},
  booktitle={Medical image computing and computer-assisted intervention--MICCAI 2015: 18th international conference, Munich, Germany, October 5-9, 2015, proceedings, part III 18},
  pages={234--241},
  year={2015},
  organization={Springer}
}

@article{daras_ambient_2023,
  title={Ambient diffusion: Learning clean distributions from corrupted data},
  author={Daras, Giannis and Shah, Kulin and Dagan, Yuval and Gollakota, Aravind and Dimakis, Alex and Klivans, Adam},
  journal={Advances in Neural Information Processing Systems},
  volume={36},
  pages={288--313},
  year={2023}
}

@inproceedings{hawthorne_enabling_2018,
title={Enabling Factorized Piano Music Modeling and Generation with the {MAESTRO} Dataset},
author={Curtis Hawthorne and Andriy Stasyuk and Adam Roberts and Ian Simon and Cheng-Zhi Anna Huang and Sander Dieleman and Erich Elsen and Jesse Engel and Douglas Eck},
booktitle={International Conference on Learning Representations},
year={2019},
}

@article{zhang_plug_2021,
  title={Plug-and-play image restoration with deep denoiser prior},
  author={Zhang, Kai and Li, Yawei and Zuo, Wangmeng and Zhang, Lei and Van Gool, Luc and Timofte, Radu},
  journal={IEEE Transactions on Pattern Analysis and Machine Intelligence},
  volume={44},
  number={10},
  pages={6360--6376},
  year={2021},
  publisher={IEEE}
}

@article{ahmed_blind_2013,
  title={Blind deconvolution using convex programming},
  author={Ahmed, Ali and Recht, Benjamin and Romberg, Justin},
  journal={IEEE Transactions on Information Theory},
  volume={60},
  number={3},
  pages={1711--1732},
  year={2013},
  publisher={IEEE}
}

@article{giassi_use_2015,
  title={Use of high dynamic range imaging for quantitative combustion diagnostics},
  author={Giassi, Davide and Liu, Bolun and Long, Marshall B},
  journal={Applied optics},
  volume={54},
  number={14},
  pages={4580--4588},
  year={2015},
  publisher={OSA}
}

@article{tao_perceptual_2025,
  title={Perceptual region-driven infrared-visible co-fusion for extreme scene enhancement},
  author={Tao, Jing and Zong, Yonghong and Guan, Banglei and Sun, Pengju and Lei, Taihang and Shang, Yang and Yu, Qifeng},
  journal={Optics \& Laser Technology},
  volume={192},
  pages={113870},
  year={2025},
  publisher={Elsevier}
}
    \bibliographystyle{tmlr}

    \appendix
    \section{Appendix: Proof of Theorem \ref{Grand theoreme}} \label{sec: proof}
    We recall that the idea of the proof is to upper-bound two probabilities:
\textit{(i)} the probability that the number of common unsaturated measurements is small;
\textit{(ii)} the probability that two signals have the same common unsaturated measurements given that the number of saturated measurements is low.
Before we begin, we need to introduce some tools of linear embedding.
\Cref{theorem remi} allows us to prove \Cref{RIP pour nombre saturation} used to control the first targeted probability. \Cref{Embedding} is used to control the second one.
We also recall the Hoeffding inequality for sub-Gaussian random variables, which we will use to verify an assumption for \Cref{theorem remi}.

\begin{lemma} \label{Hoeffding inequality}
    \textbf{Hoeffding inequality}.
    Let $X_1, \dots, X_m$ be independent random variables, where $X_i$ has mean $\mu_i$ and is $\sigma_i$-sub-Gaussian.
    Then for all $t\geq 0$ we have
    \begin{equation*}
        \mathbb{P}\bigg(\left|\sum^m_{i=0} X_i -\mu_i\right|\geq t\bigg) \leq 2\exp\Bigg(-\frac{t^2}{2\sum\limits^{m}_{i=0} \sigma^2_i}\Bigg)
    \end{equation*}
\end{lemma}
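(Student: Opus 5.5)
We will prove the inequality by the classical Chernoff (exponential Markov) argument. Throughout, we read ``$X_i$ is $\sigma_i$-sub-Gaussian with mean $\mu_i$'' in the moment-generating-function sense: the centred variable $Z_i = X_i - \mu_i$ satisfies $\mathbb{E}\, e^{\lambda Z_i} \leq e^{\lambda^2 \sigma_i^2/2}$ for all $\lambda \in \R$. This is the convention under which the stated constant $2\sum_i \sigma_i^2$ in the exponent comes out exactly. Set $S = \sum_i Z_i$ and $s^2 = \sum_i \sigma_i^2$; the index range $i=0,\dots,m$ versus $1,\dots,m$ is immaterial, since we simply sum over all the variables present.

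The first step bounds the moment generating function of the sum. By independence of the $X_i$, hence of the $Z_i$, the moment generating function factorizes:
\[
\mathbb{E}\, e^{\lambda S} = \prod_i \mathbb{E}\, e^{\lambda Z_i} \leq \exp\Big(\tfrac{\lambda^2 s^2}{2}\Big).
\]
In other words, $S$ is itself $s$-sub-Gaussian.

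The second step controls the upper tail. For $\lambda > 0$, Markov's inequality applied to $e^{\lambda S}$ gives
\[
\mathbb{P}(S \geq t) \leq e^{-\lambda t}\, \mathbb{E}\, e^{\lambda S} \leq \exp\Big(-\lambda t + \tfrac{\lambda^2 s^2}{2}\Big).
\]
Minimizing the exponent over $\lambda$ gives the choice $\lambda = t/s^2$, and hence $\mathbb{P}(S \geq t) \leq \exp\big(-t^2/(2s^2)\big)$. The case $t=0$ is trivial, because the right-hand side of the lemma is then $2 \geq 1$.

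The third step handles the lower tail and combines the two. We apply the same argument to $-S$, which is legitimate because the sub-Gaussian MGF bound holds for negative $\lambda$ as well. This gives $\mathbb{P}(S \leq -t) \leq \exp\big(-t^2/(2s^2)\big)$. A union bound over the two events $\{S \geq t\}$ and $\{S \leq -t\}$ produces the factor $2$ and yields the lemma.

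The only genuine subtlety is the definition of sub-Gaussianity. If $\sigma_i$-sub-Gaussian were instead defined through tail bounds or Orlicz norms, we would first need the standard equivalence with the MGF bound, and this equivalence loses absolute constants. For that reason we adopt the MGF convention, which is the standard one in the Hoeffding form recalled here. All remaining steps are routine.
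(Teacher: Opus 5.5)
Your proof is correct. Using the moment-generating-function definition of $\sigma_i$-sub-Gaussianity for the centred variables, you factorize over the independent summands, apply the Chernoff bound with $\lambda = t/s^2$ to both $S$ and $-S$, and take a union bound, which gives exactly the stated inequality; the degenerate case $s^2=\sum_i\sigma_i^2=0$, where $S=0$ almost surely, is trivial. The paper gives no proof of its own and only recalls this as the standard Hoeffding bound, so your argument is the textbook derivation it relies on.
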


\begin{theorem}\label{theorem remi}
    This theorem is adapted from~\citet[Theorem II.2]{puy_recipes_2017}. \\
    Let us assume that the normalized set $S_\X$ has a finite upper box-counting dimension $\dim(S_\X)$ which is strictly bounded by $k \geq 1$ i.e. $\dim(S_\X) < k$, and so there exists a signal set dependent constant $\epsilon^*  \in \left(0, \frac{1}{2}\right)$ such that $N_{S_\X }(\epsilon) \leq \epsilon^{-k}$ for all $\epsilon \leq \epsilon^*$.
    Let $\A: R^n \to \R^m$ such that $A_{ij} \sim \mathcal{N}\left(0, \frac{1}{m}\right)$.
    Then there exists $c_1>0$ such that, for any $\xi, \delta \in (0,1)$, we have $\sup\limits_{\x\in S_\X} \left|  c\|\A\x\|_1 - 1 \right| \leq \delta$ with probability at least $1- \xi$ provided that
    \begin{equation*}
        m \geq \dfrac{C_1}{c_1\delta^2}\max\Bigg( k\log\bigg(\frac{1}{\epsilon^*}\bigg), \log\bigg(\frac{6}{\xi}\bigg) \Bigg),
    \end{equation*}
    where $C_1$ is an absolute constant.
\end{theorem}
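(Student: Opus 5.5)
The plan is to follow the ``recipe'' of \citet{puy_recipes_2017}: a pointwise concentration bound obtained from \Cref{Hoeffding inequality}, extended to a uniform bound over $S_\X$ by a multiscale (chaining) covering argument. Only the box-counting bound $N_{S_\X}(\epsilon)\le\epsilon^{-k}$ for $\epsilon\le\epsilon^*$ is used, so $n$ never enters.

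\textbf{Normalisation and pointwise step.} First fix the normalising constant. For $\x\in S_\X$ we have $\|\x\|_2=1$, so each $(\A\x)_i\sim\mathcal N(0,1/m)$ and these entries are independent. Hence $\mathbb E|(\A\x)_i|=\sqrt{2/(\pi m)}$, and we take $c=\sqrt{\pi/(2m)}$, which gives $\mathbb E\,c\|\A\x\|_1=1$. More generally, for any fixed $\vb\in\R^n$,
\[
c\|\A\vb\|_1=\sum_{i=1}^m c\,|(\A\vb)_i| .
\]
Each summand $c|(\A\vb)_i|$ is a $1$-Lipschitz function of a Gaussian of standard deviation $c\|\vb\|_2/\sqrt m$. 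It is therefore sub-Gaussian around its mean, with parameter $\sigma_i\lesssim \|\vb\|_2/m$. \Cref{Hoeffding inequality} then gives
\[
\mathbb P\bigl(|c\|\A\vb\|_1-\|\vb\|_2|\ge t\|\vb\|_2\bigr)\le 2e^{-c_1 m t^2}
\]
for an absolute constant $c_1$. This is the concentration assumption required by the recipe.

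\textbf{Chaining.} Let $\epsilon_j=\epsilon^*2^{-j}$ for $j\ge0$, and let $\mathcal N_j\subset S_\X$ be an $\epsilon_j$-net with $|\mathcal N_j|\le\epsilon_j^{-k}$. For $\x\in S_\X$ let $\pi_j(\x)\in\mathcal N_j$ be a nearest net point. Then $\x=\pi_0(\x)+\sum_{j\ge1}(\pi_j(\x)-\pi_{j-1}(\x))$, where each link has norm at most $3\epsilon_j$. By the triangle inequality for $\|\cdot\|_1$,
\[
|c\|\A\x\|_1-1|\le |c\|\A\pi_0(\x)\|_1-1|+\sum_{j\ge1}c\|\A(\pi_j(\x)-\pi_{j-1}(\x))\|_1 .
\]
Convergence of the series is justified because $\pi_j(\x)\to\x$ and $\A$ is a.s.\ a bounded linear map.

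\textbf{Union bounds.} At level $0$, apply the pointwise bound with $t=\delta/2$ and take a union bound over $|\mathcal N_0|\le(\epsilon^*)^{-k}$ points. At level $j\ge1$, apply it to the at most $\epsilon_j^{-2k}$ link vectors $\vb$ with deviation $t_j$. On that event each link contributes at most $(1+t_j)\,3\epsilon_j$. Choose $t_j$ growing like $\sqrt{j}$ times a constant, so that
\begin{itemize}
\item the failure probabilities $2\epsilon_j^{-2k}e^{-c_1 m t_j^2}$ are summable and dominated by $\xi/2$ once $m\gtrsim \delta^{-2}\max(k\log(1/\epsilon^*),\log(6/\xi))$; here one uses $\log(1/\epsilon_j)=\log(1/\epsilon^*)+j\log2$;
\item the deterministic sum $\sum_j 3\epsilon_j(1+t_j)\lesssim\epsilon^*$ is at most $\delta/2$.
\end{itemize}

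\textbf{Main obstacle.} The second requirement is where care is needed. The top scale $\epsilon^*$ is fixed by the set, not chosen by us, so the tail of the chain cannot be made smaller than a constant multiple of $\epsilon^*$ through the chaining sum alone. To fix this, I would start the chain at a finer scale $\epsilon_0=\min(\epsilon^*,\delta/C)$. The price is that $\log(1/\epsilon_0)$ may then contain a $\log(1/\delta)$ term. This term is either absorbed into the constant $C_1$ (using $\delta<1$ and the $\delta^{-2}$ factor already present), or handled as in \citet{puy_recipes_2017}. There, the links are controlled via concentration on the normalized secant set, and the deviation at scale $\epsilon_j$ is allowed to be of relative size $\delta$ rather than absolute size, which removes the $\epsilon^*$ issue. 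Intersecting all the good events gives the claimed bound uniformly over $S_\X$ with probability at least $1-\xi$.
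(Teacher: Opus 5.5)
There is a genuine gap. Your normalisation $c=\sqrt{\pi/(2m)}$ and the pointwise Hoeffding step are correct, and you locate the obstacle correctly, but neither fix removes it.

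You bound each link through the triangle inequality, $c\|\A(\pi_j(\x)-\pi_{j-1}(\x))\|_1\le(1+t_j)\,3\epsilon_j$. Your bound on the tail is therefore at least $c\|\A(\x-\pi_0(\x))\|_1\approx\|\x-\pi_0(\x)\|_2$, which can be of order $\epsilon_0$ whatever the $t_j$. Your second option does not help either. A relative deviation $\delta$ on the links still costs $(1+\delta)3\epsilon_j$ per link. Concentration over the normalised secant set of $S_\X$ is not available, because a bound on $\dim(S_\X)$ gives no control of the covering numbers of that secant set.

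Your first option, starting at $\epsilon_0=\min(\epsilon^*,\delta/C)$, forces a union bound over $\epsilon_0^{-k}$ points. It yields $m\gtrsim\delta^{-2}\big(k\log(1/\epsilon^*)+k\log(1/\delta)+\log(1/\xi)\big)$. The $k\log(1/\delta)$ term cannot be absorbed into $C_1$. The factor $\delta^{-2}$ multiplies every term, so absorption would require $\log(1/\delta)\lesssim\max\big(\log(1/\epsilon^*),\log(6/\xi)\big)$ uniformly over $\delta\in(0,1)$. That fails as $\delta\to0$ with $k,\epsilon^*,\xi$ fixed. Your argument thus gives only the classical net bound with an extra $\log(1/\delta)$. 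That is harmless for the later use with $\delta=1$, but it is not what the theorem claims.

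The missing idea, which the paper supplies, is to chain the centred process $h(\x)=c\|\A\x\|_1-\|\x\|_2$ itself. This needs a sub-Gaussian \emph{increment} bound for pairs: $\mathbb{P}\big(|h(\ub)-h(\vb)|\ge\lambda\|\ub-\vb\|\big)\le 2e^{-c_1m\lambda^2}$ for fixed $\ub,\vb\in S_\X\cup\{\boldsymbol 0\}$.

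The paper proves this by Hoeffding applied to $X_i=|\boldsymbol a_i^\top\ub|-|\boldsymbol a_i^\top\vb|$. It uses $|X_i|\le|\boldsymbol a_i^\top(\ub-\vb)|$, so $X_i$ is sub-Gaussian with parameter $\lesssim\|\ub-\vb\|_2/\sqrt m$. For $\ub,\vb\in S_\X$ the means cancel automatically; for $\vb=\boldsymbol 0$, your choice of $c$ is exactly what centres $h(\ub)$. The paper then invokes the chaining theorem of \citet{puy_recipes_2017}.

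Your pointwise bound covers only the case $\vb=\boldsymbol 0$. Applying it to the difference vector $\pi_j(\x)-\pi_{j-1}(\x)$ throws away the cancellation between $c\|\A\pi_j(\x)\|_1$ and $c\|\A\pi_{j-1}(\x)\|_1$.

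With the increment bound, the level-$j$ link costs $3\lambda_j\epsilon_j$ with $\lambda_j\lesssim\delta\sqrt{(1+j)/C_1}$. The union bound over $\epsilon_j^{-2k}$ pairs, with failure budget $\propto\xi e^{-j}$, is paid for by the stated condition on $m$, because $\log(1/\epsilon_j)=\log(1/\epsilon^*)+j\log 2$ and $\log(1/\epsilon^*)>\log 2$. The tail is then $O(\epsilon^*\delta/\sqrt{C_1})\le\delta/2$ for $C_1$ large. Level $0$ is handled by the pairs $(\z,\boldsymbol 0)$ with $\z\in\mathcal N_0$. The chain can start at $\epsilon^*$, and no $\log(1/\delta)$ term appears.
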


The proof of \Cref{theorem remi} closely follows that of~\citet[Theorem II.2]{puy_recipes_2017}. \textcolor{reviews2}{It consists in using a general chaining argument (see, e.g., ~\citet[Chapter 8]{vershynin_high-dimensional_2018}) on the function $\x \in \R^n \mapsto h(\x) := c\|\A\x\|_1 - \|\x\|_2$ provided that, for any fixed vectors $\ub,\vb \in S_\X \cup \left\{\boldsymbol{0}\right\}$ and with high probability, the absolute difference $|h(\ub) - h(\vb)|$ does not deviate too much from the distance $\|\ub -\vb\|$, a fact that we prove the following lemma}.
\begin{lemma}
    \textcolor{reviews2}{Let $\A: \R^n \to \R^m$ such that $A_{ij} \sim \mathcal{N}\left(0, \frac{1}{m}\right)$, $c>0$, and $h$ be defined} as
    \begin{equation*}
        \begin{array}{cccl}
        h:  & \R^n & \to    & \R \\
            & \x    & \mapsto& c\|\A\x\|_1 - \|\x\|_2.
    \end{array}
    \end{equation*}
    Then there exist a constant $c_1 \in \R^+$ such that for any fixed $\ub,\vb \in S_\X \cup \left\{\boldsymbol{0}\right\}$,
    \begin{align*}
        &\mathbb{P}\left( |h(\ub) - h(\vb)| \geq \lambda \|\ub -\vb \|\right) \leq 2e^{-c_1 m\lambda^2} \quad \forall \lambda \in \R^+.
    \end{align*}
\end{lemma}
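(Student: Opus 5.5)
The plan is to write $h(\ub)-h(\vb)$ as a sum of $m$ independent, centred, sub-Gaussian terms and then apply \Cref{Hoeffding inequality} (Hoeffding) with $t=\lambda\|\ub-\vb\|$.

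\textbf{Choice of $c$.} I take $c$ to be the normalising constant of \Cref{theorem remi}, namely $c=\sqrt{\pi/(2m)}$. Write $\boldsymbol{a}_i^\top$ for the $i$-th row of $\A$, so $\boldsymbol{a}_i=\boldsymbol{g}_i/\sqrt m$ with $\boldsymbol{g}_i\sim\mathcal N(0,\boldsymbol{I}_n)$. For fixed $\x$, $\boldsymbol{a}_i^\top\x\sim\mathcal N(0,\|\x\|^2/m)$, hence $\mathbb E|\boldsymbol{a}_i^\top\x|=\sqrt{2/\pi}\,\|\x\|/\sqrt m$. With this $c$ we get $\mathbb E\, c\|\A\x\|_1=\|\x\|_2$, so $\mathbb E\,h(\x)=0$ for every $\x$. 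For any other value of $c$, $h(\ub)-h(\vb)$ has a deterministic bias $(c\sqrt{2m/\pi}-1)(\|\ub\|-\|\vb\|)$. This bias prevents a bound of the stated form uniformly in $\lambda$ and $m$, which is why the constant must be the one from \Cref{theorem remi}.

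\textbf{Decomposition.} Write
\[
h(\ub)-h(\vb)=\sum_{i=1}^m X_i,\qquad X_i:=c\bigl(|\boldsymbol{a}_i^\top\ub|-|\boldsymbol{a}_i^\top\vb|\bigr)-\tfrac1m\bigl(\|\ub\|-\|\vb\|\bigr).
\]
The $X_i$ are independent because the rows of $\A$ are independent, and each $X_i$ has mean zero by the choice of $c$. This also covers the cases $\ub=\boldsymbol 0$ or $\vb=\boldsymbol 0$, since $h(\boldsymbol 0)=0$.

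\textbf{Sub-Gaussian parameter (the key step).} View $X_i$ as a function of the standard Gaussian vector $\boldsymbol{g}_i$. By the reverse triangle inequality,
\[
\bigl||\boldsymbol{g}^\top\ub|-|\boldsymbol{g}^\top\vb|\bigr|\le|\boldsymbol{g}^\top(\ub-\vb)|\le\|\boldsymbol{g}\|\,\|\ub-\vb\|,
\]
so $\boldsymbol{g}\mapsto \frac{c}{\sqrt m}\bigl(|\boldsymbol{g}^\top\ub|-|\boldsymbol{g}^\top\vb|\bigr)$ is Lipschitz with constant $L=c\|\ub-\vb\|/\sqrt m=\sqrt{\pi/2}\,\|\ub-\vb\|/m$.

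Gaussian concentration for Lipschitz functions (e.g.~\citet[Thm.~5.2.2]{vershynin_high-dimensional_2018}) then shows that $X_i$, which is this function minus its mean, is $\sigma_i$-sub-Gaussian with $\sigma_i\le C_0\|\ub-\vb\|/m$ for an absolute constant $C_0$. The important feature is that $\sigma_i$ is proportional to $\|\ub-\vb\|$ and to $1/m$.

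\textbf{Conclusion.} Summing gives
\[
\sum_{i=1}^m\sigma_i^2\le C_0^2\,\|\ub-\vb\|^2/m.
\]
Applying \Cref{Hoeffding inequality} with $t=\lambda\|\ub-\vb\|$ yields
\[
\mathbb P\bigl(|h(\ub)-h(\vb)|\ge\lambda\|\ub-\vb\|\bigr)\le 2\exp\!\Bigl(-\frac{\lambda^2\|\ub-\vb\|^2}{2C_0^2\|\ub-\vb\|^2/m}\Bigr)=2e^{-c_1m\lambda^2},
\]
with $c_1=1/(2C_0^2)$. The case $\ub=\vb$ is trivial.

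The only delicate point is the sub-Gaussian estimate: it must depend on $\|\ub-\vb\|$ and not on $\|\ub\|+\|\vb\|$. The Lipschitz argument above gives exactly this dependence.
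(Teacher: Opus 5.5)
\textbf{Gap in the key step.} The step you single out as the key one fails as justified. The reverse triangle inequality $\bigl||\boldsymbol{g}^\top\ub|-|\boldsymbol{g}^\top\vb|\bigr|\le|\boldsymbol{g}^\top(\ub-\vb)|\le\|\boldsymbol g\|\,\|\ub-\vb\|$ bounds the \emph{value} of $F(\boldsymbol g):=|\boldsymbol{g}^\top\ub|-|\boldsymbol{g}^\top\vb|$. It does not bound $|F(\boldsymbol g)-F(\boldsymbol g')|$ by $\|\ub-\vb\|\,\|\boldsymbol g-\boldsymbol g'\|$, and that Lipschitz bound is in fact false. Almost everywhere, $\nabla F(\boldsymbol g)=\operatorname{sign}(\boldsymbol g^\top\ub)\,\ub-\operatorname{sign}(\boldsymbol g^\top\vb)\,\vb$. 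This equals $\pm(\ub+\vb)$ on the wedge where $\boldsymbol g^\top\ub$ and $\boldsymbol g^\top\vb$ have opposite signs.

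Concretely, take $\ub=\boldsymbol e_1$, $\vb=\cos\theta\,\boldsymbol e_1+\sin\theta\,\boldsymbol e_2$ and $\boldsymbol g=t\boldsymbol e_1+\boldsymbol e_2$. Moving $t$ from $0$ to $-\tan\theta$ changes $F$ from $-\sin\theta$ to $\tan\theta$, i.e.\ by about $2\theta$ over a distance of about $\theta$, whereas $\|\ub-\vb\|\approx\theta$. So the Lipschitz constant is of order $\|\ub\|+\|\vb\|$. Gaussian Lipschitz concentration then only gives $\sigma_i\lesssim(\|\ub\|+\|\vb\|)/m$, hence a tail $2\exp\bigl(-c'\,m\lambda^2\|\ub-\vb\|^2/(\|\ub\|+\|\vb\|)^2\bigr)$. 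That is exactly the dependence you said must be avoided, and it is useless for the chaining at small scales.

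\textbf{Repair and comparison with the paper.} Use your inequality as a domination rather than as a Lipschitz bound; this is exactly the paper's argument. Since $\boldsymbol g^\top(\ub-\vb)\sim\mathcal N(0,\|\ub-\vb\|^2)$, the bound $|F(\boldsymbol g)|\le|\boldsymbol g^\top(\ub-\vb)|$ transfers Gaussian tails to $F$. The paper phrases this as $\mathbb E|X_i|^q\le\mathbb E|\boldsymbol a_i^\top(\ub-\vb)|^q$. Hence $F$ has sub-Gaussian norm $\lesssim\|\ub-\vb\|$, and centering costs only an absolute constant. So your $X_i=\frac{c}{\sqrt m}\bigl(F(\boldsymbol g_i)-\mathbb E F(\boldsymbol g_i)\bigr)$ is sub-Gaussian with parameter $\lesssim\|\ub-\vb\|/m$, and your Hoeffding step goes through unchanged.

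With that fix, your proof is the paper's proof with one genuine improvement. The paper applies Hoeffding without subtracting the means and writes $h(\ub)-h(\vb)=c(\|\A\ub\|_1-\|\A\vb\|_1)$, which is valid only when $\|\ub\|=\|\vb\|=1$. You correctly note that pairs involving $\boldsymbol 0$ carry the bias $(c\sqrt{2m/\pi}-1)(\|\ub\|-\|\vb\|)$, which forces $c=\sqrt{\pi/(2m)}$ if $c_1$ is to be independent of $m$.

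Be aware that you then prove the lemma only for this $c$, not for the arbitrary $c>0$ of the statement. For $\ub,\vb\in S_\X$ the bias vanishes for every $c$, with $c_1$ depending on $\sqrt m\,c$. Your choice still serves the corollary: taking $\delta=\sqrt{\pi/2}-1$ turns $\sqrt{\pi/(2m)}\,\|\A\boldsymbol z\|_1\le1+\delta$ into $\|\A\boldsymbol z\|_1\le\sqrt m$.
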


\begin{proof}
    We consider $h: \x\mapsto c\|\A\x\|_1 - \|\x\|_2$ with $c>0$.
    Restricted to $S_\X$, we have $h: \x\mapsto c\|\A\x\|_1 - 1$.
    \begin{align}
        &\hspace{14pt} \mathbb{P}\left(|h(\ub) -h(\vb)| \geq \lambda \|\ub-\vb\|\right) \\
        &= \mathbb{P}(c|\|\A\ub\|_1 - \|\A\vb\|_1| \geq \lambda \|\ub-\vb\|) \nonumber\\
        &= \mathbb{P}\Bigg(c\left|\sum\limits^m_{i=0} |\boldsymbol{a}^\top_{i}\ub|-|\boldsymbol{a}^\top_{i}\vb|\right|\geq \lambda \|\ub-\vb\|\Bigg). \label{use hoeffing1}
    \end{align}
    We note $X_i= |\boldsymbol{a}^\top_i \ub| - |\boldsymbol{a}^\top_i \vb|$.
    Let's show that $X_i$ is sub-Gaussian: For all $q\in \mathbb{N}$ and for all $i\in \left\{ 1,\dots, m\right\}$,
    \begin{equation*}
        \mathbb E \left[|X_i|^q\right] = E\left[ \left| |\boldsymbol{a}^\top_{i}\ub| - |\boldsymbol{a}^\top_{i}\vb|\right|^q\right] \leq E[|\boldsymbol{a}^\top_i(\ub - \vb)|^q]
    \end{equation*}
    by the reverse triangle inequality.
    As
    \[ \boldsymbol{a}^\top_i(\ub-\vb) \sim \mathcal{N}\bigg(0, \frac{\|\ub-\vb\|^2_2}{m}\bigg),\]
    $\boldsymbol{a}^\top_i(\ub-\vb)$ is Gaussian and thus $X_i$ is also sub-Gaussian with parameter $\sigma^2_i \leq \frac{\|\ub-\vb\|^2_2}{m}.$
    We can therefore apply the \Cref{Hoeffding inequality}:
    \begin{equation*}
        \mathbb{P}\Bigg(c\left|\sum\limits^m_{i=0} X_i\right| > t\Bigg)\leq 2\exp\Bigg(-\frac{t^2}{2c^2\sum\limits_{i=0}^m \sigma^2_i}\Bigg)
    \end{equation*}
    and thus give an upper bound of~\eqref{use hoeffing1}:
     \begin{align*}
         \textstyle
         \mathbb{P}(|h(\ub)-& h(\vb)| \geq \lambda \|\ub-\vb\|) \\
         &\leq 2\exp\Bigg(-\dfrac{(\lambda \|\ub-\vb\|)^2}{2c^2 \sum\limits_{i=0}^m\sigma^2_i}\Bigg) \\
                                            &\leq 2\exp\Bigg(-\frac{(\lambda \|\ub-\vb\|)^2}{2c^2 \sum\limits_{i=0}^m\frac{\|\ub-\vb\|^2_2}{m}}\Bigg) \\
                                            &\leq 2\exp\bigg(-\frac{\lambda^2}{2c^2}\bigg) \\
                                            & = 2\exp\big(-mc_1\lambda^2\big).
    \end{align*}
    We conclude the proof with $c_1 =\dfrac{1}{2mc^2}$.
\end{proof}

\begin{corollary}\label{RIP pour nombre saturation}
    Under assumptions of \Cref{theorem remi}, we have for all $m \geq k\log\bigg(\dfrac{1}{\epsilon^*}\bigg)$:
    \begin{equation*}
        \mathbb{P}\big(\forall \x\in \X : \|\A\x\|_1 \leq \sqrt{m}\|\x\|_2\big) > 1 - 6e^{-Cm}.
    \end{equation*}
\end{corollary}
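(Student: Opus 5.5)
The plan is to apply \Cref{theorem remi} with a specific normalization constant $c$ and a fixed distortion level $\delta$, and then extend from $S_\X$ to the whole cone $\X$ by homogeneity. Throughout I use the normalization $A_{ij}\sim\mathcal N(0,\frac1m)$ of \Cref{theorem remi}.

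\textbf{Choice of $c$.} For a fixed $\x$, each entry $\boldsymbol a_i^\top\x$ is distributed as $\mathcal N(0,\|\x\|_2^2/m)$. Hence
\[
\mathbb E|\boldsymbol a_i^\top\x|=\sqrt{\tfrac{2}{\pi m}}\,\|\x\|_2
\qquad\text{and}\qquad
\mathbb E\|\A\x\|_1=\sqrt{\tfrac{2m}{\pi}}\,\|\x\|_2 .
\]
The natural choice is therefore $c=\sqrt{\pi/(2m)}$, which makes $c\|\A\x\|_1$ concentrate around $1$ on $S_\X$. This is the setting in which the preceding lemma and the chaining argument behind \Cref{theorem remi} are meant to be used.

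\textbf{Choice of $\delta$ and $\xi$.} On the event of \Cref{theorem remi}, every $\x\in S_\X$ satisfies
\[
\|\A\x\|_1\le (1+\delta)\sqrt{\tfrac{2m}{\pi}} .
\]
We want this to be at most $\sqrt m$, which holds as soon as $(1+\delta)\sqrt{2/\pi}\le1$, i.e.\ $\delta\le\sqrt{\pi/2}-1\approx0.25$. I fix $\delta=1/4$.

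Next I take $\xi=6e^{-Cm}$, so that $\log(6/\xi)=Cm$. The sample condition of \Cref{theorem remi} then reads
\[
m\ \ge\ \frac{16\,C_1}{c_1}\max\Big(k\log\tfrac{1}{\epsilon^*},\,Cm\Big).
\]
The second term in the maximum is satisfied automatically once $C\le c_1/(16C_1)$, which fixes the absolute constant $C$. The first term is the assumption $m\ge k\log(1/\epsilon^*)$, up to the absolute factor $16C_1/c_1$. I would absorb that factor into the statement, since such constants are implicitly understood in the corollary and in \Cref{Grand theoreme}.

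\textbf{Extension to the cone.} Every nonzero $\x\in\X$ can be written $\x=\|\x\|_2\,\boldsymbol s$ with $\boldsymbol s\in S_\X$. Both sides of the desired inequality are positively $1$-homogeneous, so
\[
\|\A\x\|_1=\|\x\|_2\,\|\A\boldsymbol s\|_1\le\sqrt m\,\|\x\|_2 .
\]
The case $\x=\boldsymbol0$ is trivial. All of this holds simultaneously for every $\x\in\X$ on a single event of probability at least $1-\xi=1-6e^{-Cm}$, which is the claim.

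\textbf{Main obstacle.} The delicate step is the bookkeeping of constants. The $c_1$ produced by the preceding lemma is written as depending on $m$ and $c$, namely $c_1=1/(2mc^2)$. With $c=\sqrt{\pi/(2m)}$ this gives $c_1=1/\pi$, an absolute constant, which is precisely what makes the exponent $Cm$ legitimate. I would check this explicitly. I would also state clearly that the requirement $m\ge k\log(1/\epsilon^*)$ holds only up to the absolute factor $16C_1/c_1$.
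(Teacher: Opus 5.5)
Your argument has the same skeleton as the paper's proof. You invoke \Cref{theorem remi} with a specific normalization $c$, set $\xi = 6e^{-Cm}$ and choose $C$ so that the $\log(6/\xi)$ branch of the sample condition holds automatically, then pass from $S_\X$ to the cone by homogeneity.

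The difference is in the constants. The paper takes $c = 2/\sqrt{m}$ (so $c_1 = 1/8$) and $\delta = 1$, reading $|c\|\A\z\|_1 - 1| \le 1$ directly as $\|\A\z\|_1 \le \sqrt{m}$; this gives the factor $8C_1$. Your centered choice $c = \sqrt{\pi/(2m)}$, $\delta = 1/4$ is the more defensible one, for two reasons:
\begin{itemize}
\item $\delta = 1$ lies outside the range $\delta \in (0,1)$ allowed by \Cref{theorem remi}.
\item For $c = 2/\sqrt{m}$, the quantity $c\|\A\z\|_1$ concentrates around $2\sqrt{2/\pi} \approx 1.6$, not $1$, so the theorem cannot hold for that $c$ with small $\delta$. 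The chaining argument needs centered increments, and for $\ub \in S_\X$ one has $\mathbb{E}[h(\ub) - h(\boldsymbol{0})] = c\sqrt{2m/\pi} - 1$, which vanishes only for your $c$.
\end{itemize}
Deriving the paper's conclusion from the centered theorem forces $\delta \le \sqrt{\pi/2} - 1$, which is exactly your computation. The cost is a larger factor, $16C_1/c_1 = 16\pi C_1$ instead of $8C_1$, which is immaterial since $C_1$ is unspecified anyway.

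The point you flagged is a genuine gap, but the paper has it too. Your argument only yields the conclusion for $m \ge (16C_1/c_1)\, k\log(1/\epsilon^*)$. The paper asserts that $m \ge 8C_1 \max\big(k\log(1/\epsilon^*), Cm\big)$ holds for all $m \ge k\log(1/\epsilon^*)$. That silently requires $8C_1 \le 1$, and nothing in \Cref{theorem remi} guarantees it. So neither proof establishes the corollary with the hypothesis exactly as written.

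Rather than saying the constant is ``implicitly understood'', state the hypothesis as $m \ge C' k\log(1/\epsilon^*)$ for an absolute constant $C'$. This change must also be carried into the condition $m \ge k\log(1/\epsilon^*)$ of \Cref{Grand theoreme}, which is inherited from this corollary.
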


\begin{proof}
    By applying \Cref{theorem remi} for $c = \frac{2}{\sqrt{m}}$ we deduce that with $c_1 =\frac{1}{2mc^2} = \frac{1}{8}$, for any $\xi, \delta \in (0,1)$, we have with probability $1-\xi$,
    \begin{equation}
        \forall \z \in S_\X, \quad \left|\frac{2}{\sqrt{m}}\|\A\z\|_1 -1\right| \leq \delta \label{eq: ripz}
    \end{equation}
    as long as
    \begin{equation*}
        m \geq \dfrac{C_1}{c_1\delta^2}\max\Bigg(k\log\bigg(\frac{1}{\epsilon^*}\bigg), \log\bigg(\frac{6}{\xi}\bigg) \Bigg).
    \end{equation*}
    For all $\x \in \X$, we obtain by replacing $\z$ by $\frac{\x}{\| \x \|}$ in \eqref{eq: ripz},
    \begin{equation}
        \|\A\x\|_1 \leq (\delta+1)\dfrac{\sqrt{m}}{2}\|\x\|_2
    \end{equation}
    provided that
    \begin{equation*}
        m \geq \dfrac{8C_1}{\delta^2}\max\Bigg(k\log\bigg(\frac{1}{\epsilon^*}\bigg), \log\bigg(\frac{6}{\xi}\bigg) \Bigg).
    \end{equation*}
    By setting $C = \frac{1}{8C_1}$, $\delta = 1$  and $\xi = 6\exp\big(-Cm\big)$, we can verify that the last inequality holds for all $m \geq k\log\bigg(\dfrac{1}{\epsilon^*}\bigg)$ and thus:
    \begin{equation*}
        \mathbb{P}\big(\forall \x\in \X : \|\A\x\|_1 \leq\sqrt{m}\|\x\|_2\big) > 1 - 6e^{-Cm}.
    \end{equation*}
\end{proof}

\begin{theorem} Adapted from~\citet[Thm. 4.3]{robinson_dimensions_2010} \label{Embedding}
    Let a compact set $\X$ with finite upper box-counting dimension $\dim(\X)<k$ and $\A\in \R^{m\times n}$ such that $A_{ij} \sim \mathcal{N}\left(0, \frac{1}{m}\right)$.
    If $m>2k$ then
    \begin{equation*}
        \mathbb{P}\left(\A\x =\A\ub \textrm{ for some }\x \neq \ub \in \X \right) = 0
    \end{equation*}
\end{theorem}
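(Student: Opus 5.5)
The plan is the classical Sauer–Yorke–Casdagli / Robinson prevalence argument: a union bound over a covering of $\X\times\X$, with pairs of points separated by at least $\delta$ handled first and $\delta\to 0$ taken only at the end.

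\textbf{Reduction.} Since $\dim(\X)<k$, fix $d$ with $\dim(\X)<d<k$; by definition of the upper box-counting dimension there is $\epsilon_0>0$ such that $N_\X(\epsilon)\leq \epsilon^{-d}$ for all $\epsilon<\epsilon_0$. For $\delta>0$ let $E_\delta$ be the event that $\A\x=\A\ub$ for some $\x,\ub\in\X$ with $\|\x-\ub\|_2\geq\delta$. For $M>0$ let $F_M=\{\|\A\|\leq M\}$. The bad event of the theorem is $\bigcup_{q\in\mathbb{N}}E_{1/q}$. It therefore suffices to show $\mathbb{P}(E_\delta\cap F_M)=0$ for every $\delta,M$, and then let $M\to\infty$ using $\mathbb{P}(F_M)\to 1$.

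\textbf{Small-ball estimate for a fixed difference.} For a fixed $\z\neq 0$ we have $\A\z\sim\mathcal N(0,\frac{\|\z\|_2^2}{m}I_m)$. The Gaussian density is bounded by $(m/2\pi\|\z\|^2)^{m/2}$, and the volume of the Euclidean ball of radius $r$ is at most $(c r/\sqrt m)^m$. Together these give
\[
\mathbb{P}(\|\A\z\|_2\leq r)\leq \Big(\frac{c' r}{\|\z\|_2}\Big)^m
\]
for an absolute constant $c'$.

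\textbf{Covering argument.} Cover $\X$ by $N=N_\X(\epsilon)$ balls of radius $\epsilon$ with centres $\x_1,\dots,\x_N\in\X$. Suppose $E_\delta\cap F_M$ holds, witnessed by $\x,\ub$ with $\|\x-\ub\|\ge\delta$ and $\A\x=\A\ub$. Take centres with $\|\x-\x_i\|\le\epsilon$ and $\|\ub-\x_j\|\le\epsilon$, and assume $\epsilon<\delta/4$. Then $\|\x_i-\x_j\|\geq\delta/2$, and
\[
\|\A(\x_i-\x_j)\|\leq \|\A\|\,(\|\x-\x_i\|+\|\ub-\x_j\|)\leq 2M\epsilon .
\]
A union bound over at most $N^2$ pairs, each with $\|\x_i-\x_j\|\ge\delta/2$, then yields
\[
\mathbb{P}(E_\delta\cap F_M)\leq \epsilon^{-2d}\Big(\frac{4c'M\epsilon}{\delta}\Big)^m = \Big(\frac{4c'M}{\delta}\Big)^m\epsilon^{\,m-2d}.
\]
Because $m>2k>2d$, the right-hand side tends to $0$ as $\epsilon\to0$. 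Since the left-hand side does not depend on $\epsilon$, it equals $0$. Countable subadditivity over $\delta=1/q$ and the limit $M\to\infty$ conclude.

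\textbf{Main obstacle.} The delicate point is the interplay between the exponents. The crude operator-norm Lipschitz bound costs nothing in the exponent of $\epsilon$ only because it is paid through the fixed constant $M$ (by restricting to $F_M$), and the separation $\delta$ is needed so that the small-ball bound is not destroyed by tiny $\|\x_i-\x_j\|$. This is exactly why the theorem yields probability zero but no quantitative rate, and why the strict gap $d<k$, which makes $m-2d>0$ whenever $m>2k$, must be secured at the outset.
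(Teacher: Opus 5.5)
Your proof is correct, but it takes a different route from the paper. The paper gives no argument of its own. It invokes Robinson's Theorem 4.3 as a black box: a prevalent set of linear maps is injective on $\X$, even with a H\"older-continuous inverse. It then asserts that ``almost every'' means the non-injective matrices form a null set, hence have zero probability under the Gaussian law of $\A$. That last step implicitly uses two facts: in finite dimensions, shy sets coincide with Lebesgue-null sets, and the law of $\A$ is absolutely continuous.

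You instead prove injectivity from scratch and work directly with the Gaussian law. Your ingredients are:
\begin{itemize}
\item the small-ball bound $\mathbb{P}(\|\A\z\|_2\le r)\le (c'r/\|\z\|_2)^m$, where one can take $c'=\sqrt{e}$;
\item an $\epsilon$-net of $\X$ fixed independently of $\A$, with a union bound over at most $\epsilon^{-2d}$ pairs of centres;
\item the truncation $\{\|\A\|\le M\}$, which replaces the compactly supported probe measure of the prevalence framework;
\item the separation $\delta$, removed at the end by a countable union.
\end{itemize}
This is essentially the injectivity half of the Sauer--Yorke--Casdagli/Robinson argument, specialised to Gaussian matrices.

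Your route is self-contained and removes the paper's informal passage from prevalence to probability. It also makes transparent where the factor $2$ comes from, namely the exponent $m-2d>0$. As a by-product it shows that $m>2\dim(\X)$ already suffices, which is slightly weaker than the stated $m>2k$. The paper's citation is shorter and delivers more than is needed: a H\"older inverse, and validity for any absolutely continuous law of $\A$. Neither extra is used later in the recovery proof.
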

Note that Robinson's theorem states that almost every linear map \(\boldsymbol L \colon \mathbb{R}^N \to \mathbb{R}^k\) (for \( m > 2k \)) embeds \(\X\) injectively with Hölder distortion \(\|\x-\y\| \leq C\|\boldsymbol L\x-\boldsymbol L\y\|^\rho\), for some $\rho \in (0,1)$.
By considering the definition of ``almost every'' in the sense of measure, we can say that the set of non-injective linear maps has zero measure, and therefore zero probability.

For the rest of the proof, we will consider
\[
    \X_R = \X \cap  \mathbb B_{\|.\|_{2}}\left(0, \sqrt{m}\mu\alpha\right)
\]
with $R = \sqrt m \mu \alpha$ and $\alpha > 0$ to be define latter.
We define
    \begin{align*}
        &\Isat(\y) = \left\{ j : |y_j| \geq \mu\right\} \\
        &\Inonsat(\y) = \left\{ j : \vert y_j \vert < \mu\right\}
    \end{align*}
as the sets of the index of respectively unsaturated and saturated entries of $\y$.
We note $\A^\mathcal{I}$ the matrix represents the row-submatrix of $A$ where only the rows indexed by $\mathcal{I}$ are selected.

Let $ \I= \Inonsat(\A \x) \cap  \Inonsat(\A\ub)$ be the intersection of the unsaturated index sets of $\A\x$ and $\A\ub$.
Using the law of total probability we can decompose the probability of \Cref{Grand theoreme} into the two target probabilities.
By considering:
\begin{align*}
    &\mathcal{A} = \left\{\exists \x,\ub \in \X_R, \x\neq \ub : \A^{\I}\x = \A^{\I}\ub \right\}\\
    &\mathcal{B} = \left\{\forall \x,\ub \in \X_R, \x\neq \ub : \left|\I\right| > (1-2\alpha)m\right\}
\end{align*}
where $\mathcal{A}$ is the set of events where there exists two signals having the same common unsaturated measurements, $\mathcal{B}$ is the set of events where the number of common unsaturated measurements is greater than $(1-2\alpha)m$ for all couple of signals.
And taking it into the law of total probability:
\begin{align}
        \mathbb{P}\left(\mathcal{A}\right)  &= \mathbb{P}\left(\mathcal{A} \cap \mathcal{B}\right) + \mathbb{P}\left(\mathcal{A} \cap\bar{\mathcal{B}}\right) \nonumber \\
                    &= \mathbb{P}\left(\mathcal{A} \mid \mathcal{B}\right) \times \mathbb{P}\left(\mathcal{B}\right) + \mathbb{P}\left(\mathcal{A} \mid \bar{\mathcal{B}}\right)\times \mathbb{P}\left(\bar{\mathcal{B}}\right) \nonumber\\
                    &\leq \mathbb{P}\left(\mathcal{A} \mid \mathcal{B}\right) + \mathbb{P}\left(\bar{\mathcal{B}}\right) \label{eq: RIPs}
\end{align}
as we have that $\mathbb{P}\left(\mathcal{B}\right)<1$ and $\mathbb{P}\left(\mathcal{A}|\bar{\mathcal{B}}\right) < 1$.

$\mathbb{P}\left(\mathcal{A} \mid \mathcal{B}\right)$ and $\mathbb{P}\left(\bar{\mathcal{B}}\right)$ are respectively the probability that two signals have the same common unsaturated measurements given that the number of saturated measurements is low,
and the probability there exist two distinct signals whose the number of common unsaturated measurements is low.
Next, we upper-bound separately the two terms $\mathbb{P}\left(\mathcal{A} \mid \mathcal{B}\right)$ and $\mathbb{P}\left(\bar{\mathcal{B}}\right)$.
We start with the first one:
\begin{align}
    \mathbb{P}\big(\mathcal{A} \mid \mathcal{B}\big) & = \mathbb{P}\Bigg(\bigg\{\exists \x, \ub \in \X_R, \x \neq \ub : \A^{\I} \x = \A^{\I} \ub \bigg\} \Bigg|\bigg\{\forall \x, \ub \in \X_R, \x \neq \ub : |\I| > (1-2\alpha)m\bigg\}\Bigg) \nonumber \\
    & \leq  \mathbb{P}\Big(\exists ~ \mathcal{I} \subset \llbracket 0 : m \rrbracket, \, |\mathcal{I}| > (1-2\alpha)m, \ \exists \x, \ub \in \X_R, \x \neq \ub : \A^{\mathcal{I}} \x = \A^{\mathcal{I}} \ub\Big). \nonumber
\end{align}
And because of $\X$ is a cone, we have~\citep{robinson_dimensions_2010}:
\[\dim(\X_R) \leq \dim(\X \cap \mathbb{S}) + 1 = \dim(S_\X) + 1 < k+1\]
We thus have, as long as $(1-2\alpha)m > 2(k+1)$, and thanks to \Cref{Embedding}:
\begin{equation}
    \mathbb{P}\left(\exists ~ \mathcal{I} \subset \llbracket 0 : m \rrbracket, |\mathcal{I}| > (1-2\alpha)m, \, \exists  \x,\ub \in \X_R, \x\neq \ub : \A^{\mathcal{I}}\x = \A^{\mathcal{I}}\ub\right) = 0 \label{upper_bound_1}
\end{equation}
And so the first right-hand side term of \eqref{eq: RIPs} is equals 0.

From $(1-2\alpha)m > 2(k+1)$ which imposes $\alpha < \frac{1}{2}- \frac{k+1}{m}$, we can fix $\alpha$ and therefore fix $R < \sqrt m \mu \left(\frac{1}{2} - \frac{k+1}{m}\right)$.
We also note as $\alpha >0$, we necessarily have  $2(k+1) < m$. \\
We can now upper-bound the second term appearing in \eqref{eq: RIPs}:
\begin{align*}
    \mathbb{P}\big(\exists \x,\ub \in \X_R, \x\neq \ub : \left|\I\right| \leq (1-2\alpha)m\big).
\end{align*}
By observing that for two set I and J of $\left\{1, \dots, m\right\}$ with $|I \cap J | \leq (1 - 2\alpha)m$, this means that $\left|\bar I \cup \bar J\right| > 2\alpha m$, which implies that either $\left|\bar I\right| > \alpha m \textrm{ or } \left|\bar J\right| > \alpha m$ (since otherwise the union has size $ < 2\alpha m$), therefore $|I| \leq (1 - \alpha) m \textrm{ or } |J| \leq (1 - \alpha) m$, we can deduce:
\begin{align*}
    &\left|\I\right| \leq (1-2\alpha)m\\
   \Leftrightarrow & \left|\Inonsat(\A\x) \cap \Inonsat(\A\ub)\right| \leq (1-2\alpha)m \\
    \Rightarrow & \left\{ \textrm{or} \begin{array}{l}
        \left|\Inonsat(\A\x) \right| \leq \left\lfloor \frac{m + (1-2\alpha)m}{2} \right\rfloor \leq (1-\alpha)m   \\
        \\
        \left|\Inonsat(\A\ub) \right| \leq \left\lfloor \frac{m + (1-2\alpha)m}{2} \right\rfloor \leq  (1-\alpha)m
    \end{array} \right.\\
\end{align*}
we have
\begin{align*}
    &\mathbb{P}\left(\exists \x,\ub \in \X_R, \x\neq \ub : |\I| \leq (1-2\alpha)m\right) \\
    &\leq \mathbb{P}\left(\left\{\exists \x \in \X_R : \left|\Inonsat(\A\x)\right| \leq (1-\alpha)m \right\} \cup \left\{\exists \ub \in \X_R : \left|\Inonsat(\A\ub)\right| \leq (1-\alpha)m\right\}\right) \\
    &\leq 2\mathbb{P}\left(\exists \x\in \X_R: \left|\Inonsat(\A\x)\right| \leq (1-\alpha)m\right)
\end{align*}
and if we remark that
\begin{equation*}
    \|\A\x\|_1 = \sum\limits^m_{i=1}|\boldsymbol{a}^\top_i \x| \geq \sum\limits_{i\in \Isat(\A\x)} |\boldsymbol{a}^\top_i \x| \geq \sum\limits_{i\in \Isat(\A\x)} \mu
\end{equation*}
we obtain
\begin{equation*}
     \|\A\x\|_1 \geq \left|\Isat(\A\x)\right|\mu.
\end{equation*}
Thanks to \Cref{RIP pour nombre saturation} we know that with probability at least $1-6e^{-Cm}$,
\[\|\A\x\|_1 \leq \sqrt{m}\|\x\|_2 \leq m\alpha\mu.\]
We deduce from these two last inequalities
\begin{align*}
    & \left|\Isat(\A\x)\right|\mu \leq m\alpha\mu \\
    & m-\left|\Inonsat(\A\x)\right| \leq m\alpha \\
    & (1-\alpha)m\leq \left|\Inonsat(\A\x)\right|.
\end{align*}
And thus
\begin{equation*}
    \mathbb{P}\left(\exists \x\in \X_R: \left|\Inonsat(\A\x)\right| \leq (1-\alpha)m\right)  \leq 6e^{-Cm}
\end{equation*}
as long as
\begin{equation*}
    m \geq k\log\left(\dfrac{1}{\epsilon^*}\right).
\end{equation*}
Finally,
\begin{equation}
    \mathbb{P}\left(\bar{\mathcal{B}}\right) \leq 12e^{-Cm}. \label{eq:upper_bound_2}
\end{equation}
we conclude by adding the upper-bound obtain in \eqref{upper_bound_1} and \eqref{eq:upper_bound_2}:
\begin{align*}
    \mathbb{P}\left( \exists \x,\ub \in \X_R, \x\neq \ub : \eta(\A\x) = \eta\left(\A\ub\right)\right) & \leq  \mathbb{P}\left(\exists \x,\ub \in \X_R, \x\neq \ub : \A^{\I}\x = \A^{\I}\ub \right) \\
    & \leq \mathbb{P}\left(\mathcal{A} \mid \mathcal{B}\right)+ \mathbb{P}\left(\bar{\mathcal{B}}\right) \\
    & \leq 0 +  12e^{-Cm}.
\end{align*}
\textcolor{reviews2}{Finally, \eqref{eq: result signal recovery} follows from a simple rescaling of the radius $R$ and the vectors $\boldsymbol x$ and $\boldsymbol u$ by a factor $\sqrt m$, and an opposite rescaling of the entries $A_{ij}$ of $\A$ by $1/\sqrt m$ so that $A_{ij} \sim \mathcal{N}(0,1)$.}

    \newpage
    \textcolor{reviews}{\section{Appendix: Hyperparameters, training details \textcolor{reviews2}{and images reconstruction}} \label{sec: hyperparameters}}
    \begin{table}[h]
     \caption{Hyperparameters for different experiments.}
     \label{tab:hyperparams}
     \begin{tabular}{|l|c|c|p{4cm}|c|c|c|}
         \hline
         \textbf{Section}                   & $\lambda$ & $p_g$                   & \textbf{Architecture}                         & Learning rate & Epochs & Batch size \\
         \hline
         \Cref{subsubsec:MNIST}             & $1$       & $\mathcal{U}(0.1, 2)$   & Bias-free U-Net, 4 down + 4 up blocks & $5e^{-4}$ & $300$ & $50$ \\
         \hline
         \Cref{subsubsec:synthetic dataset} & $1$       & $\mathcal{U}(0.5, 1.5)$ & MLP, input size 100, hidden size 100, depth 5 & $1e^{-4}$ & $300$ & $100$\\
         \hline
         \Cref{subsec:audio}                & $0.1$     & $\mathcal{U}(0.1, 2.0)$ & Bias-free U-Net, 5 down + 5 up blocks & $5e^{-4}$ & $100$ & $35$ \\
         \hline
         \Cref{subsec:HDR}                  & $0.1$     & $\mathcal{U}(0.2, 1.5)$ & Bias-free U-Net, 4 down + 4 up blocks & $5e^{-5}$ & $400$ & $12$ \\
         \hline
     \end{tabular}
\end{table}

\begin{figure}[h]
    \centering
    \includegraphics{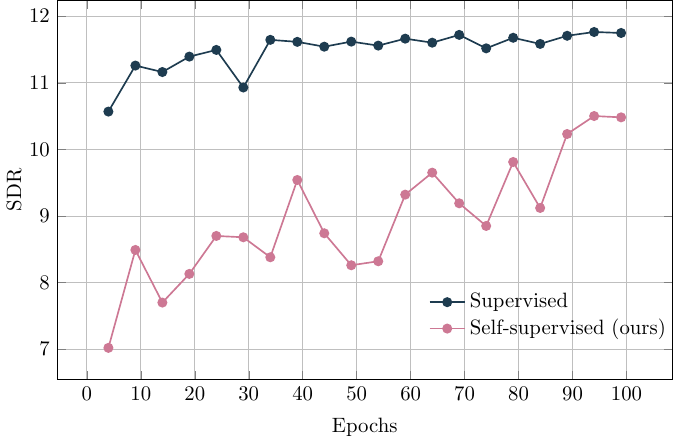}
    \caption{SDR performance on the evaluation dataset during training for supervised and self-supervised methods.}
    \label{training curve}
\end{figure}
\begin{figure}[htbp]
    \centering
    \includegraphics{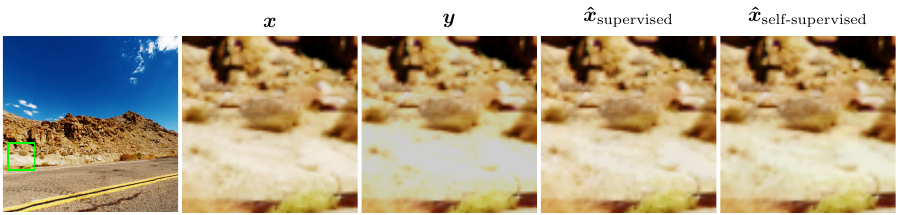}
    \includegraphics{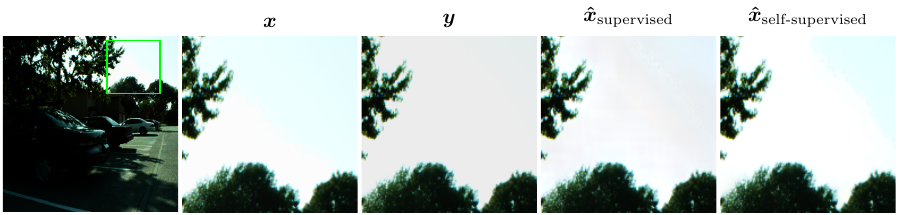}
    \includegraphics{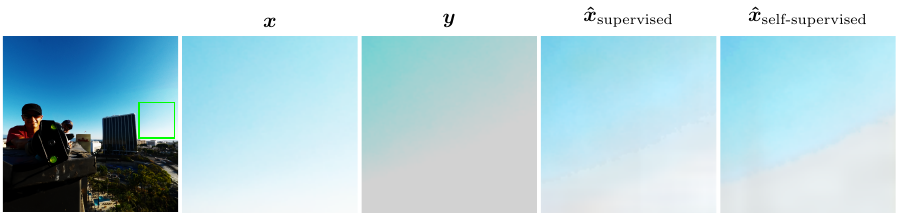}
    \includegraphics{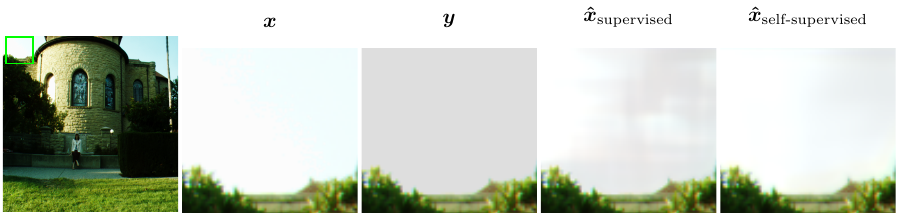}
    \includegraphics{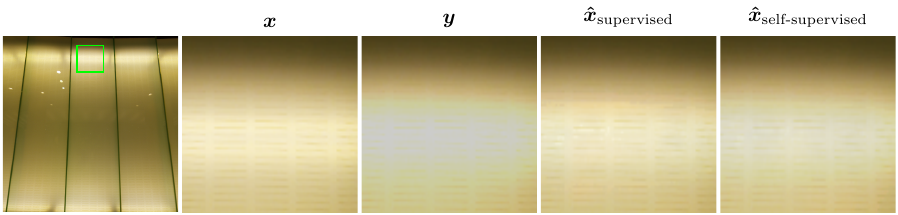}
    \caption{Image reconstruction on the test dataset. Artifacts can appear in saturation zone.}
    \label{fig: images example appendix}
\end{figure}

\end{document}